\setlist[itemize]{noitemsep, nolistsep,topsep=1pt}
\setlist[enumerate]{noitemsep, nolistsep,topsep=1pt}
\renewcommand{\fnum@figure}{Fig. \thefigure}
\definecolor{changecolor}{RGB}{192,64,0}
\definecolor{changedelcolor}{RGB}{140,140,0}
\newtheorem{theorem}{Theorem}[section]
\newtheorem{lemma}[theorem]{Lemma}
\newtheorem{corollary}[theorem]{Corollary}
\newtheorem{proposition}[theorem]{Proposition}
\newtheorem{observation}{Observation}[section]
\theoremstyle{remark}
\theoremstyle{definition}
\newtheorem{definition}[theorem]{Definition}
\crefname{section}{Section}{Sections}
\crefname{figure}{Figure}{Figures}
\crefname{table}{Table}{Tables}
\crefname{lemma}{Lemma}{Lemmas}
\crefname{theorem}{Theorem}{Theorems}
\crefname{corollary}{Corollary}{Corollaries}
\crefname{observation}{Observation}{Observations}
\crefname{proposition}{Proposition}{Propositions}
\crefname{conjecture}{Conjecture}{Conjectures}
\crefname{example}{Example}{Examples}
\crefname{definition}{Definition}{Definitions}
\crefname{remark}{Remark}{Remarks}
\newcommand{\NN}{\mathfrak{N}}
\newcommand{\BB}{\mathfrak{B}}
\newcommand{\PP}{\mathcal{P}}
\newcommand{\CC}{\mathcal{C}}
\newcommand{\canonical}[1]{#1^{c}}
\newcommand{\QQ}{\mathcal{Q}}
\newcommand{\rhotilde}{\widetilde{\rho}}
\newcommand{\dtilde}{\widetilde{d}}
\newcommand{\T}{\mathcal{T}}
\newcommand{\TT}{\widetilde{\mathcal{T}}}
\newcommand{\m}{\mu}
\newcommand{\eqover}[1]{\stackrel{\mathclap{\normalfont\scriptsize\mbox{#1}}}{=}}
\newcommand{\Split}{\mathrm{Split}}
\newcommand{\SSS}{\mathcal{S}}
\newcommand{\supp}{\mathrm{supp}}
\title{Distinguishing Phylogenetic Level-2 Networks with Quartets and Inter-Taxon Quartet Distances}
\author[1,*]{Niels~Holtgrefe}
\author[2]{Elizabeth~S.~Allman}
\author[3]{Hector~Ba\~{n}os}
\author[1]{Leo~van~Iersel}
\author[4]{Vincent~Moulton}
\author[2]{John~A.~Rhodes}
\author[5]{Kristina~Wicke}
\affil[1]{Delft Institute of Applied Mathematics, Delft University of Technology, Delft, The Netherlands}
\affil[2]{Department of Mathematics and Statistics, University of Alaska Fairbanks, Fairbanks, AK, USA}
\affil[3]{Department of Mathematics, California State University San Bernardino, San Bernardino, CA, USA}
\affil[4]{School of Computing Sciences, University of East Anglia, Norwich, United Kingdom}
\affil[5]{Department of Mathematical Sciences, New Jersey Institute of Technology, Newark, NJ, USA}
\date{}
\begin{document}

\maketitle

\begingroup
\renewcommand\thefootnote{}\footnotetext{*Corresponding author\\ \emph{Email addresses:} 
\href{mailto:n.a.l.holtgrefe@tudelft.nl}{n.a.l.holtgrefe@tudelft.nl},
\href{mailto:e.allman@alaska.edu}{e.allman@alaska.edu}, \href{mailto:hector.banos@csusb.edu}{hector.banos@csusb.edu},
\href{mailto:l.j.j.vaniersel@tudelft.nl}{l.j.j.vaniersel@tudelft.nl},
\href{mailto:v.moulton@uea.ac.uk}{v.moulton@uea.ac.uk},
\href{mailto:j.rhodes@alaska.edu}{j.rhodes@alaska.edu},
\href{mailto:kristina.wicke@njit.edu}{kristina.wicke@njit.edu} 
}
\endgroup

\begin{abstract}
The inference of phylogenetic networks, which model complex evolutionary processes including hybridization and gene flow, remains a central challenge in evolutionary biology. Until now, statistically consistent inference methods have been limited to phylogenetic level-1 networks, which allow no interdependence between reticulate events. In this work, we establish the theoretical foundations for a statistically consistent inference method for a much broader class: semi-directed level-2 networks that are outer-labeled planar and galled. We precisely characterize the features of these networks that are distinguishable from the topologies of their displayed quartet trees. Moreover, we prove that an inter-taxon distance derived from these quartets is circular decomposable, enabling future robust inference of these networks from quartet data, such as concordance factors obtained from gene tree distributions under the Network Multispecies Coalescent model. Our results also have novel identifiability implications across different data types and evolutionary models, applying to any setting in which displayed quartets can be distinguished.
\\
\\
\textbf{Keywords:} Phylogenetic network, semi-directed network, reticulate evolution, quartet, identifiability, circular split system
\end{abstract}

\section{Introduction}

As analysis of genomic datasets has advanced, growing evidence has indicated that hybridization and other reticulate events 
play a significant role in evolution (see, e.g.,~\cite{swithers2012role, deBaun2023widespread,sessa2012}). Hence it is 
desirable, and increasingly common, to rely on phylogenetic networks instead of phylogenetic trees to depict evolutionary relationships between species.
We consider semi-directed (phylogenetic LSA)
networks in which the root is suppressed, since the root location cannot be identified under many models of evolution and data types \cite{solislemus2016-snaq,
banos2019identifying,Ane2023-anomalousnetworks,gross2021-markovlvl1,xu2022-pairwisedistiden}, and such networks might be rooted in practice using  
an outgroup \cite{kilman2016-rooting}. Limited by both the lack of theoretical identifiability results for complex networks, as well as computational constraints, 
existing statistically consistent inference methods focus on the class of semi-directed \emph{level-1} networks \cite{solislemus2016-snaq,allman2019nanuq,kong2024-phynest,holtgrefe2025squirrel,allman2024nanuq+}. 
Within this class of networks, reticulate events are assumed to be isolated from each other, 
forcing the network to consist of only tree-like parts and disjoint cycles. Numerous identifiability results have been obtained for this restricted 
class of networks, proving that theoretically (most of) a level-1 network can be distinguished from biological data under several evolutionary models \cite{gross2021-markovlvl1,banos2019identifying,xu2022-pairwisedistiden,allman2022-logdetlvl1,allman2024-numericidentlvl1,englander2025identifiability}.

Several results have shown that certain features of higher-level networks 
(informally, networks with a higher degree of interdependence between reticulate events) can be identified from biological data. 
Specifically, the \emph{tree-of-blobs} --- {a tree depicting only the tree-like branching structure of a network --- 
is identifiable from gene tree probabilities under several variants of the Network Multispecies Coalescent model 
\cite{allman2023tree,rhodes2025identifying}, 
as well as from nucleotide sequences 
under the Jukes-Cantor and Kimura-2-Parameter models \cite{englander2025identifiability}. Similarly, when a \emph{blob} 
(a 2-edge-connected component of the network) is \emph{outer-labeled planar} 
(i.e., it can be drawn in the plane without crossing edges and with the taxon labels on the outside), 
the circular order of its pendant subnetworks is identifiable from the same types of data and under the same models \cite{rhodes2025identifying,englander2025identifiability}.

Two recent breakthroughs have shown the identifiability of the full structure of some higher-level networks. 
On the one hand, \cite{englander2025identifiability} 
showed that under the Jukes-Cantor model, semi-directed binary, strongly tree-child level-2 networks can be identified from nucleotide sequences if the networks contain no 3-cycles within blobs. On the other hand, \cite{AABR2025galledTC} showed identifiability of certain semi-directed binary, galled, strongly tree-child level-$k$ networks from gene tree probabilities under variations of the Network Multispecies Coalescent model.  
Both these results demonstrate that theoretically sound inference methods for networks more general than level-1 are possible, though neither suggest a particular computationally tractable inference procedure.

\medskip

In this paper, we take a first step in this direction by providing the theoretical foundations for a statistically consistent inference method for a class of level-2 networks. In particular, our main result (\cref{thm:identifiability}) revolves around a canonical form that characterizes exactly when semi-directed level-2 networks that are both galled and outer-labeled planar (see \cref{fig:humans_example} 
for an example of such a network) can and cannot be distinguished by means of displayed quartets. Hence, this canonical form is the most refined network that can theoretically be inferred when using displayed quartets alone, and thus also illuminates the theoretical limits of level-2 network inference methods relying only on quartets. Since our result does not rely on any specific model, immediate identifiability results follow
from biological data generated under models of evolution that meet our criteria. In particular, our result 
shows that the canonical form can be identified under any model in which the topology of displayed quartets can be identified. This includes all aforementioned models \cite{englander2025identifiability,rhodes2025identifying}.

\begin{figure}[htb]
    \centering
    \includegraphics[width=0.25\textwidth]{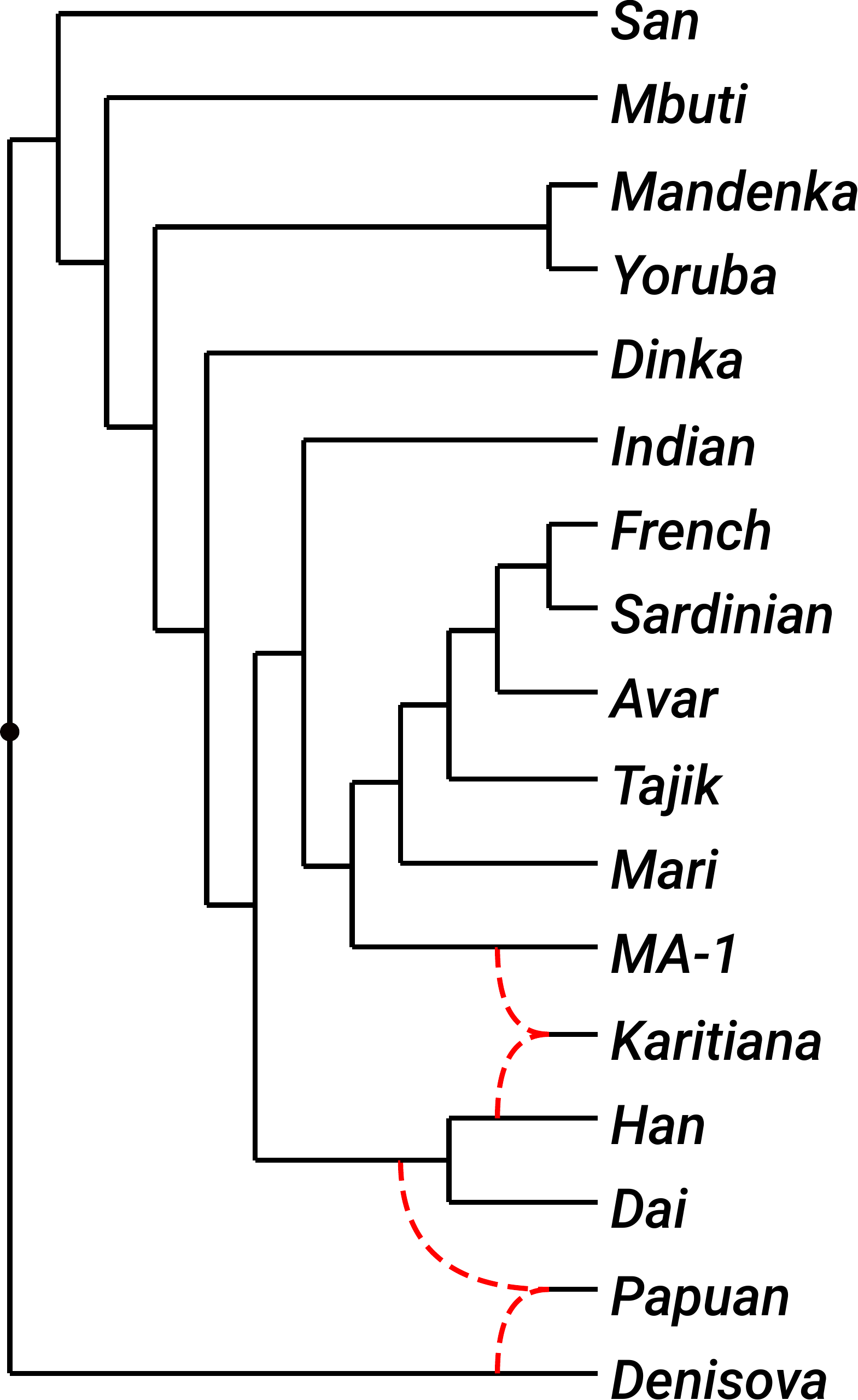}
    \qquad\qquad
    \includegraphics[width=0.45\textwidth]{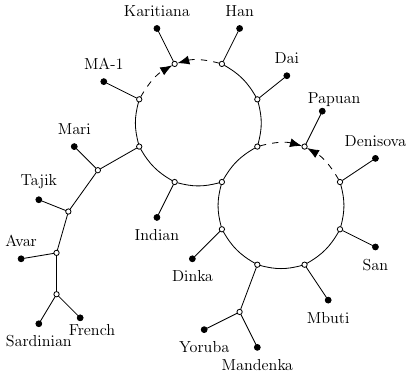}
    \caption{\emph{Left:} A rooted phylogenetic level-2 network on 17 human populations \cite{raghavan2014upper}, where the directions go from left to right and the root is located at the black dot. The network was constructed from 16 complete genomes from modern human worldwide and \emph{MA-1}, a 24,000-year-old anatomically modern human from the Mal'ta site in south-central Siberia. \emph{Right:} The semi-directed phylogenetic network obtained from the rooted network by suppressing its root and only retaining directions of the dashed hybrid edges. The network is level-2, outer-labeled planar, and galled.}
    \label{fig:humans_example}
\end{figure}

Our identifiability proof is constructive in nature and 
at its heart lies a generalization of the inter-taxon quartet distances 
associated to semi-directed level-1 networks\footnote{See also \cite{holtgrefe2025squirrel} for a similar distance.}
from \cite{allman2019nanuq,allman2024nanuq+}. These distances have the advantage that they 
can be computed directly from data such as \emph{concordance factors}: summaries of gene tree distributions under the Network Multispecies Coalescent model. 
We prove that these distances are \emph{circular decomposable} for so-called \emph{bloblets}: networks with a single internal blob (see \cref{thm:NANUQ_circular}). Essentially, these distances can be decomposed into sums of simpler metrics in a way that respects
the circular ordering of the taxa induced by the planar embedding of the network.  
Interestingly, this approach directly leads to 
an inference algorithm 
for computing outer-labeled planar, galled, level-2 networks in the spirit of the existing level-1 inference tools NANUQ and NANUQ$^+$ \cite{allman2019nanuq,allman2024nanuq+}.
An outline of this algorithm is sketched in the discussion; specifics 
and its implementation will follow in future work.

This paper is structured as follows. \cref{sec:preliminaries} covers preliminaries on phylogenetic networks, split systems, circular decomposable metrics, and the quartet metric from \cite{rhodes2019topological}. In \cref{sec:displ_quartet_metric} we prove some auxiliary results used in \cref{sec:NANUQ_metric} to prove circular decomposability of the inter-taxon quartet distances under consideration. \cref{sec:identifiability} contains the main result about our canonical form and the biological identifiability results that follow from it. We end with a discussion in \cref{sec:discussion}.

\section{Preliminaries}\label{sec:preliminaries}

\subsection{Phylogenetic networks}

We use standard terminology for phylogenetic networks as in \cite{steel2016-book}.

\begin{definition}[Rooted phylogenetic network]
A \emph{(binary) rooted (phylogenetic) network} $N^+$ on a set of at least two taxa $X$ is a rooted directed acyclic graph such that (i) the (unique) root has in-degree zero and out-degree two; (ii) its leaves are of in-degree one and out-degree zero and they are bijectively labeled by elements of $X$; (iii) all other nodes either have in-degree one and out-degree two (known as \emph{tree nodes}), or in-degree two and out-degree one (known as \emph{hybrid nodes}).
\end{definition}

A rooted network admits a natural partial order of its nodes. The \emph{least stable ancestor (LSA)} of a set of leaves $Y$ of a rooted network is the lowest node through which all directed paths from the root to any leaf in $Y$ must pass. Throughout this work, we assume that for all rooted networks on $X$, the root is the LSA of $X$ (also known as \emph{LSA networks}). The two edges directed towards a hybrid node are called \emph{hybrid edges}. We call a leaf in $X$ a \emph{hybrid (leaf)} if its parent is a hybrid node.

\begin{definition}[Semi-directed phylogenetic network]\label{def:semi_directed}
A \emph{(binary) semi-directed (phylogenetic) network} $N$ on $X$ is a partially directed graph that 
can be obtained from a binary rooted phylogenetic LSA network $N^+$ by undirecting all non-hybrid edges and suppressing the former root.
\end{definition}

In the definition above, the rooted network  $N^+$ is
called a \emph{rooted partner} of $N$ (see also \cref{fig:example_network}(a) and~(b)). We define hybrid nodes and edges in semi-directed networks analogously to those for rooted networks, and note 
 that a semi-directed network may have parallel edges.
A semi-directed network without hybrid nodes is a \emph{(binary unrooted) phylogenetic tree}.

Since phylogenetic networks, whether rooted or semi-directed, never have cycles in the (semi-)directed sense, we use the term \emph{cycle} to mean nodes forming a cycle in the fully undirected network. For instance, the level-2 networks in \Cref{fig:humans_example} each have three cycles, with the semi-directed network on the right showing two 8-cycles, while the third is a 14-cycle that includes all nodes within the large blob.

\begin{figure}[ht]
    \centering
    \includegraphics[width=0.82\textwidth]{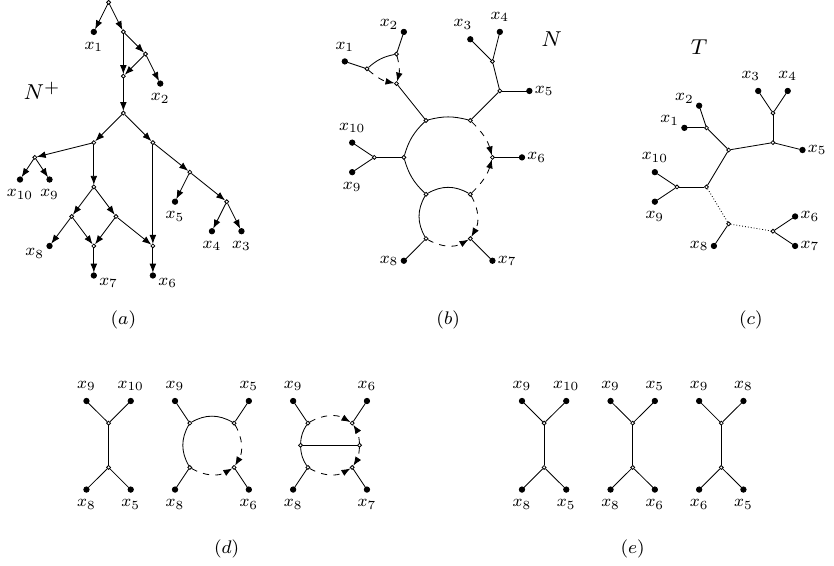}
    \caption{$(a)$: A rooted phylogenetic network~$N^+$ on $X = \{x_1, \ldots, x_{10} \}$.
    $(b)$: The semi-directed phylogenetic network~$N$ on $X = \{x_1, \ldots, x_{10} \}$ obtained from $N^+$. The network $N$ is outer-labeled planar, strictly level-2 and galled.
    $(c)$: A phylogenetic tree~$T$ on $X = \{x_1, \ldots, x_{10} \}$ that is displayed by $N$ with multiplicity~2. The tree that can be obtained from $T$ by contracting the two dotted edges is the tree-of-blobs of~$N$.
    $(d)$: Three quarnets induced by $N$.
    $(e)$: Three displayed quartets of $N$.
    }
    \label{fig:example_network}
\end{figure}

A \emph{blob} of a rooted or semi-directed network is a maximal weakly connected subgraph without any cut edges.
An \emph{articulation node} of a blob is a node in the blob that is incident to
a cut edge of the network. A blob with $m$ articulation nodes is called an
\emph{$m$-blob}. By \emph{contracting} a blob, we mean replacing the blob by a single node, and suppressing it in case this node has degree-2. The \emph{tree-of-blobs} of a semi-directed network is the phylogenetic tree 
that can be obtained from it by contracting every blob (see  \cref{fig:example_network}(c)). 
In the following definition, we list some additional properties that semi-directed networks may have.

\begin{definition}
Let $N$ be a semi-directed network. Then, we call $N$
\begin{enumerate}[label={(\roman*)}, noitemsep]
\item \emph{level-$k$} for some non-negative integer $k$ if there exist at most $k$ hybrid nodes in each blob of the network;
\item \emph{strictly level-$k$} for some positive integer $k$ if the network is level-$k$ but the network is not level-$(k-1)$;
\item a \emph{bloblet (network)}\footnote{Such networks have also been called \emph{simple}.} if the network has a single non-leaf blob;
\item \emph{$k$-cycle-free} for some non-negative integer $k$ if the network contains no cycles of length~$k$;
\item \emph{outer-labeled planar} if the network has a planar embedding with the leaves on the unbounded face;
\item \emph{galled} if for every hybrid node $h$ and every pair of partner hybrid edges $e =(v,h)$ and $e'=(v',h)$,
there exists a cycle in the network (disregarding edge directions)
that contains $e$ and $e'$ and no other hybrid edges of the network.
\end{enumerate}
\end{definition}

Note that a galled network can equivalently be defined as one with no hybrid node \emph{ancestral} to another hybrid node in the same blob (i.e., there is no pair of hybrid nodes in the same blob having a path between them in which all directed edges have the same orientation), 
or in the binary setting of this article, as one where each hybrid node is an articulation node.
We also extend the notions of \emph{level} and \emph{strict level} to individual blobs by counting the number of hybrid nodes in that particular blob. 

The specific classes of networks we consider are given in the next definition. \cref{fig:example_network}(a) and (b) depict 
a semi-directed network from the class $\NN'_2$ and one of its rooted partners.
\begin{definition}
     The (binary) semi-directed phylogenetic networks that are outer-labeled planar, galled, and level-$k$ (for some $k\geq 0$) form the class $\NN_k$. 
     The strictly level-$k$ networks in  $\NN_k$ form the subclass  $\NN'_k$.
    The bloblets in $\NN_k$ and $\NN'_k$, respectively, form the subclasses $\BB_k$ and $\BB'_k$.  
\end{definition}

An \emph{up-down path} between two labeled leaves $x_1$ and $x_2$ of a semi-directed network is a path of $k$ edges where the first $\ell$ edges are directed towards $x_1$ and the last $k - \ell$ edges are directed towards $x_2$, where  undirected edges are considered bidirected.
\begin{definition}[Subnetwork]\label{def:subnetwork}
Given a semi-directed network $N$ on $X$ and some  $Y \subseteq X$ with $| Y | \geq 2$, the \emph{subnetwork of $N$ induced by $Y$} is the semi-directed network $N|_Y$ obtained from $N$ by taking the union of all up-down paths between leaves in $Y$, followed by exhaustively suppressing all degree-2 nodes.    
\end{definition}
If $|Y| = 4$ with $Y = \{x,y,z,w\}$, we may write $N|_{xyzw}$ to mean $N|_{\{x,y,z,w\}}$. Such a network is called a \emph{quarnet}, or a \emph{quartet} in case $N$ is an unrooted phylogenetic tree (see \cref{fig:example_network}(d)).

Given a semi-directed network $N$ on $X$ and a phylogenetic tree $T$ on $X$, we say that $T$ is \emph{displayed} by $N$ if it can be obtained from $N$ by deleting exactly one hybrid edge per hybrid node, then 
recursively deleting all leaves not in $X$ and exhaustively suppressing degree-2 nodes. We use $\T (N)$ to denote the set of phylogenetic trees displayed by $N$. The \emph{multiplicity} $\mu (T , N)$ of $T$ in $\T (N)$ is the number of distinct choices of hybrid edges in $N$ that when deleted lead to the displayed tree $T$. If $T \not \in \T (N)$, then we set $\mu (T, N) = 0$. We write $\mu (N) = \sum_{T \in \T (N)} \mu (T, N)$, and note that if $N$ has $r$ hybrid nodes, then $|\T(N)| \leq \m (N) = 2^r$. 
Lastly, we refer to a quartet tree that is displayed by a quarnet of $N$ as a \emph{displayed quartet of $N$}. See \cref{fig:example_network}(c) and~(e).

\subsection{Split systems and circular metrics}

This section introduces several known concepts related to splits and metrics, see e.g. \cite{bryant2007consistency,allman2019nanuq}. Recall that a \emph{pseudo metric} $d : X^2 \rightarrow \mathbb{R}_{\geq 0}$ on a finite set of elements $X$ is a symmetric non-negative function that (i) satisfies the triangle inequality; and (ii) has the property that $d(x,x)=0$ for all $x \in X$. In particular, for $x,y \in X$, $d(x,y)=0$ need not imply $x=y$.
All results in this work revolve around pseudo metrics, which we will simply use the word \emph{metrics} for convenience unless we want to emphasize the fact that we are considering pseudo metrics.

A \emph{split} $A|B = B|A$ of a finite set $X$ with at least two elements is a bipartition of $X$ with $A, B \subseteq X$. A split is \emph{empty} if $A$ or $B$ has size 0, and it is \emph{trivial} if $A$ or $B$ has size 1. We say that a cut edge of a semi-directed network $N$ \emph{induces} the split $A|B$ of its leaf set $X$ if the removal of the cut edge disconnects the leaves labeled by elements in $A$ from those labeled by elements in~$B$. We denote the set of splits induced by the cut edges of 
a semi-directed network~$N$ as $\Split (N)$. Similarly, we denote the set of splits induced by the displayed trees of $N$ by $\Split (\T (N))$. To distinguish between the two, we call the former the \emph{induced splits of~$N$} and the latter the \emph{displayed splits of~$N$}. Clearly, $\Split(N) \subseteq \Split(\T(N))$.

The set of all nonempty splits of a finite set $X$ is denoted by $\Split (X)$, and we call a non-empty set $\SSS \subseteq \Split (X)$ a \emph{split system} on $X$. A map $\omega : \Split (X) \rightarrow \mathbb{R}_{\geq 0}$ is a \emph{weighted split system}. We let the \emph{support} of the weighted split system, denoted by $\supp (\omega)$, be the split system $\SSS \subseteq \Split (X)$ containing each split that has a positive weight $\omega$. Note that any weighted split system $\omega: \Split (X) \rightarrow \mathbb{R}_{\geq 0}$ induces a metric $d_\omega$ on $X$ as follows. Given some split $ A|B \in \Split (X)$, we define the \emph{split metric} $\delta_{A|B}$ as
\begin{equation}
    \delta_{A|B} (x,y) =
\begin{cases} 
0, & \text{if } x,y \in A \text{ or } x,y \in B; \\ 
1, & \text{otherwise.}
\end{cases}
\end{equation}
Then, the metric $d_\omega$ \emph{induced by $\omega$} is
\begin{equation}
    d_\omega (x, y) = \sum_{S \in \Split (X)} \omega(S) \cdot \delta_S (x,y).
\end{equation}
It is easy to see that the triangle inequality holds for $d_\omega$, so that it is indeed a (pseudo) metric.

Before the next definition, we define a \emph{circular order} $\CC$ of a finite set $X = \{x_1, \ldots, x_n\}$ as an ordering of its elements up to reversal and cyclic permutations. We often denote such a circular order as $\CC = (x_0, x_1, \ldots, x_n = x_0)$ or simply $\CC = (x_1, \ldots, x_n)$.

\begin{definition}[Circular split system]\label{def:circular_splitsystem}
A split system $\SSS \subseteq \Split (X)$ is \emph{circular} if there exists a circular order
$\CC = (x_0, x_1, \ldots, x_n = x_0)$ of the finite set $X = \{x_1, \ldots, x_n \}$ such that each split in $\SSS$ has the form $S_{ij} (\CC) = U_{ij} | V_{ij}$ for some $i \neq j$ with $U_{ij} = \{x_{i+1}, \ldots, x_j \}$ and $V_{ij} = \{x_{j+1}, \ldots, x_i \}$. We say that such a circular order is \emph{congruent} with $\SSS$.
\end{definition}

We sometimes write $S_{ij}$ instead of $S_{ij} (\CC)$ if $\CC$ is clear from the context. The circular split system consisting of all splits $S_{ij} (\CC)$
for a circular order $\CC$ is denoted by  $\SSS(\CC)$.
For example, \cref{fig:example_circular_splits} gives a visualization of a circular split system $\SSS \subset \SSS\big((x_1, \dots, x_n)\big)$. Note that the split system in the figure is also congruent with other circular orders not depicted. We say that a weighted split system is \emph{circular} if its support is a circular split system. 

\begin{figure}[h!]
    \centering
    \includegraphics[width=0.55\textwidth]{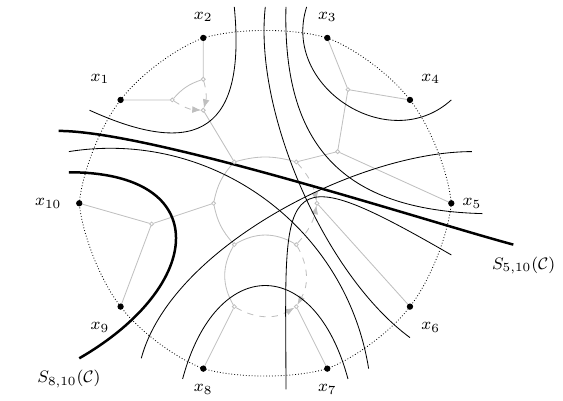}
    \caption{A visualization of the circular split system $\SSS \subset \Split (\T (N))$, containing all non-trivial displayed splits (depicted by the black lines) of the outer-labeled planar, semi-directed network~$N$ on $X = \{x_1, \ldots, x_{10}\}$ (depicted in gray) from \cref{fig:example_network}(b). $\SSS  \subset \SSS(\CC)$ is congruent with the circular order $\CC = (x_1, \ldots x_{10})$ (depicted by the dotted lines), one of the induced circular orders of~$N$. Two splits are highlighted in thick black and are labeled using the notation from \cref{def:circular_splitsystem}.}
    \label{fig:example_circular_splits}
\end{figure}

\begin{definition}[Circular decomposable metric]\label{def:circular_decomposable}
    A metric $d: X^2 \rightarrow \mathbb{R}_{\geq 0}$ is \emph{circular decomposable}, if there exists a circular weighted split system 
    $\omega: \Split(X) \rightarrow \mathbb{R}_{\geq 0}$ such that 
    $d = d_\omega$.
\end{definition}
In other words, a metric $d$ is circular decomposable if it can be decomposed as a sum of weighted split metrics, all corresponding to one circular split system. As shown in \cite{bandelt1992canonical} (see also \cite{bryant2007consistency}), if $d$ is circular decomposable, this decomposition is the unique way to decompose $d$ into a sum of weighted (weakly compatible) split metrics, i.e., the circular weighted split system $\omega$ is unique. Hence, we may also define the \emph{support} $\supp (d_\omega)$ of a circular decomposable metric $d_\omega$ as the unique support of the corresponding circular weighted split system $\omega$.

To prove that a metric is circular decomposable we use the following proposition, which follows from a lemma of Chepoi and Fichet \cite{chepoi1998note}. To avoid confusion, when referring to the split weights defined in the following proposition, we may sometimes write $\alpha_{ij} (d)$ or $\alpha_{ij} (d, \CC)$ instead of simply $\alpha_{ij}$.

\begin{proposition}\label{prop:split_weights}
    Let~$X$ be a finite set of elements, $d: X^2 \rightarrow \mathbb{R}_{\geq 0}$ a pseudo metric, $\CC = (x_0, x_1, \ldots x_n = x_0)$ a circular order of $X$ and $\SSS \subseteq \SSS (\CC)$. For all $S_{ij} \in \SSS (\CC)$, let $\alpha_{ij}$ be the \emph{split weight} of $S_{ij}$ associated with $d$ and $\CC$, which is defined as
    $$\alpha_{ij} = d (x_i,x_j) + d(x_{i+1}, x_{j+1}) - d(x_i, x_{j+1}) - d(x_{i+1}, x_j).$$ Then, $d$ is circular decomposable with support $\SSS$ if and only if $\alpha_{ij} \geq 0$ for all $S_{ij} \in \SSS (\CC)$ and $\SSS = \{S_{ij} \in \SSS (\CC): \alpha_{ij} > 0 \}$.
\end{proposition}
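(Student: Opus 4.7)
The plan is to compute $\alpha_{ij}$ applied to a single circular split metric $\delta_{S_{k\ell}}$ and then invoke linearity of $\alpha_{ij}$ in $d$. The identity I would establish first is
$$\alpha_{ij}(\delta_{S_{k\ell}}) \;=\; \begin{cases} 2 & \text{if } \{i,j\} = \{k,\ell\}, \\ 0 & \text{otherwise,} \end{cases}$$
for all $S_{ij}, S_{k\ell} \in \SSS(\CC)$. To prove this, I would perform a case analysis on how the two ``cut positions'' of $S_{k\ell}$ (between $x_k, x_{k+1}$ and between $x_\ell, x_{\ell+1}$) align with the two ``query positions'' $i$ and $j$ that $\alpha_{ij}$ inspects. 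If both cuts coincide with $\{i,j\}$, then $x_i, x_{j+1}$ lie in one block and $x_{i+1}, x_j$ lie in the other, so the four split-distances in the definition of $\alpha_{ij}$ are $1,1,0,0$, giving $\alpha_{ij}=2$. In every other subcase (zero or one cut at these positions), the four split-distances form patterns ($0,0,0,0$; $1,1,1,1$; or $0,1,0,1$) whose signed sum vanishes. A little extra care is needed in the degenerate overlap case $|i-j|\equiv 1\pmod{n}$, where one of the four distances collapses to a diagonal term $d(x_m, x_m) = 0$, but the same case analysis carries through.

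Given this identity, the forward direction follows at once: if $d = d_\omega$ with $\omega$ a circular weighted split system supported on $\SSS \subseteq \SSS(\CC)$, then linearity yields
$$\alpha_{ij}(d) \;=\; \sum_{S \in \SSS(\CC)} \omega(S)\, \alpha_{ij}(\delta_S) \;=\; 2\,\omega(S_{ij}),$$
so $\alpha_{ij} \geq 0$ throughout, and $\alpha_{ij} > 0$ precisely when $S_{ij} \in \supp(\omega) = \SSS$.

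For the reverse direction, assume $\alpha_{ij} \geq 0$ for all $S_{ij} \in \SSS(\CC)$ and $\SSS = \{S_{ij} : \alpha_{ij} > 0\}$. I would set $\omega(S_{ij}) = \alpha_{ij}/2$ on $\SSS(\CC)$ and extend by zero. By hypothesis, $\omega$ is a non-negative circular weighted split system with support exactly $\SSS$, so it remains to show $d = d_\omega$. One route is to invoke the Chepoi--Fichet lemma directly, which supplies the decomposition from the non-negativity of the $\alpha$'s. A self-contained alternative is a dimension argument: by the identity, $\{\delta_{S_{ij}} : S_{ij} \in \SSS(\CC)\}$ is a linearly independent family of $\binom{n}{2}$ elements inside the $\binom{n}{2}$-dimensional vector space of symmetric maps $X^2 \to \mathbb{R}$ vanishing on the diagonal, hence a basis. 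The difference $d - d_\omega$ lies in this space and, by the identity applied to $d_\omega$ together with linearity, satisfies $\alpha_{ij}(d-d_\omega) = 0$ for every $i,j$; expanding in this basis, all coordinates must vanish, so $d = d_\omega$.

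The main obstacle is the bookkeeping in the case analysis proving the $\alpha_{ij}(\delta_{S_{k\ell}})$ identity, particularly the overlap subcases; once that identity is in place, both directions of the equivalence fall out of the linearity of $\alpha_{ij}$ combined with either Chepoi--Fichet or the dimension argument.
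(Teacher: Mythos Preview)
Your proposal is correct. The identity $\alpha_{ij}(\delta_{S_{k\ell}}) = 2\cdot[\{i,j\}=\{k,\ell\}]$ does hold (your case analysis is accurate, including the trivial-split degenerate case), and once it is in hand both directions follow as you describe. Your dimension argument is a clean way to finish the backward direction: the $\binom{n}{2}$ split metrics $\delta_{S_{ij}}$ are linearly independent by the identity, hence a basis of the space of symmetric zero-diagonal functions, and the $\alpha_{ij}$ are (up to the factor $2$) the dual basis.

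The paper's proof is different in emphasis: it \emph{cites} the Chepoi--Fichet decomposition $d=\tfrac{1}{2}\sum_{ij}\alpha_{ij}\,\delta_{S_{ij}}$ rather than proving it, and for the forward direction it passes through the equivalence with Kalmanson metrics to obtain $\alpha_{ij}\ge 0$, then appeals to Bandelt--Dress uniqueness of the split decomposition to pin down the support. Your approach is more elementary and self-contained: the forward direction comes in one line from linearity (no Kalmanson detour, no uniqueness citation needed), and the backward direction via the dimension argument is essentially an ab initio proof of the Chepoi--Fichet lemma. The paper's version is shorter on the page because it offloads the work to references; yours is longer but relies on nothing external. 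One minor remark: the Chepoi--Fichet identity holds for \emph{any} symmetric zero-diagonal function, so the nonnegativity hypothesis is not needed to invoke it in the backward direction---it is only needed to ensure the resulting $\omega$ is a genuine weighted split system.
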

\begin{proof}
    Chepoi and Fichet \cite{chepoi1998note} proved that any symmetric function $d: X^2 \rightarrow \mathbb{R}$ with a zero diagonal can be written as 
\begin{equation}\label{eq:decomp}
        d(x,y) = \frac{1}{2} \sum_{S_{ij} \in \SSS (\CC)} \alpha_{ij} \cdot  \delta_{S_{ij}}(x,y) 
    \end{equation}
    for all $x,y \in X$ and for any circular order~$\CC$ of~$X$. The backward direction then follows. For the forward direction, note that by the main result in \cite{chepoi1998note} (see also \cite{christopher1996structure} for an alternative proof), circular decomposable metrics congruent with a circular ordering~$\CC$ are equivalent to \emph{Kalmanson metrics} \cite{kalmanson1975edgeconvex} compatible with~$\CC$, that is, metrics for which $\alpha_{ij} \geq 0$ for all $S_{ij} \in \SSS (\CC)$. So, if $d$ is circular decomposable with support~$\SSS$, all $\alpha_{ij}$ are non-negative. Therefore, \cref{eq:decomp} gives the circular decomposition of $d$. Since this decomposition is unique \cite{bandelt1992canonical}, it also follows that $\SSS = \{S_{ij} \in \SSS (\CC): \alpha_{ij} > 0 \}$.
\end{proof}

We say that a semi-directed network~$N$ on~$X$ \emph{induces} a circular order~$\CC$ of~$X$ if there exists an outer-labeled planar representation of $N$
that induces the circular order $\CC$ of the leaves in $X$ (see \cref{fig:example_circular_splits}). As shown in \cite{rhodes2025identifying}, outer-labeled planar semi-directed networks induce at least one circular order of their leaf sets. Moreover, if the outer-labeled planar network is a bloblet, this circular order is unique. See also \cite{moulton2022planar} for results related to the planarity of rooted networks. The following result straightforwardly connects the induced circular orders of an outer-labeled planar network to the displayed splits of the network.

\begin{proposition}\label{prop:splits_are_circular}
    Let $N$ be an outer-labeled planar, semi-directed network on $X$. Then, $\Split (\T (N))$ is a circular split system congruent with any circular order of $X$ induced by $N$.
\end{proposition}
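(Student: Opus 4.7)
The plan is to reduce the statement to the well-known fact that the splits of a phylogenetic tree drawn in the plane with all of its leaves on the outer face form a circular split system, congruent with the cyclic order in which the leaves appear on that face. Concretely, I fix an outer-labeled planar embedding of $N$ that induces the given circular order $\CC = (x_1, \ldots, x_n)$ of $X$, and let $T \in \T(N)$ be arbitrary. I will show that $T$ inherits an outer-labeled planar embedding from $N$ in which the leaves of $X$ lie on the unbounded face in the same circular order $\CC$; once this is established, each edge of $T$ separates $X$ into two complementary arcs of $\CC$, so $\Split(T) \subseteq \SSS(\CC)$, and taking a union over all displayed trees gives $\Split(\T(N)) \subseteq \SSS(\CC)$, which by \cref{def:circular_splitsystem} is exactly what we need.

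For the inheritance step, recall that $T$ is obtained from $N$ by deleting one hybrid edge per hybrid node, recursively deleting leaves not in $X$, and exhaustively suppressing degree-2 nodes. Each of these operations is planarity-preserving, keeps every leaf of $X$ fixed at its original position in the plane, and does not introduce a new bounded face separating any leaf of $X$ from the unbounded face: edge deletions only merge faces, while unlabeled-leaf deletion and degree-2 suppression are purely local simplifications that leave the rotation of edges at labeled leaves untouched. Consequently the leaves of $X$ remain on the outer face of the resulting embedding of $T$, and their cyclic order there is unchanged from $\CC$.

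With $T$ now realized as an outer-labeled planar tree with leaves in cyclic order $\CC$, I invoke the standard fact that removing any edge of such a tree partitions its leaf set into two subsets that each form a contiguous arc along $\CC$; hence every split induced by $T$ has the form $S_{ij}(\CC)$. The main (if minor) obstacle in the argument is the geometric bookkeeping in the previous paragraph: one has to be careful to check that after the display operation the labeled leaves remain on the outer face in the original cyclic arrangement, even though the outer face itself may expand considerably by absorbing previously interior faces. Once that is in hand, the proposition follows immediately.
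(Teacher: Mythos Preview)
Your proposal is correct and follows essentially the same approach as the paper's proof: both reduce to showing that each displayed tree $T \in \T(N)$ inherits an outer-labeled planar embedding with the same induced circular order $\CC$, and then appeal to the classical fact that the splits of such a tree lie in $\SSS(\CC)$. The paper simply asserts the inheritance step in one sentence and cites \cite{Gambette2012} for the tree fact, whereas you spell out why the display operations (edge deletion, unlabeled-leaf deletion, degree-2 suppression) preserve the outer-face status and cyclic arrangement of the labeled leaves.
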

\begin{proof}
Let $\CC$ be an induced circular order of $N$. Clearly, every tree $T \in \T(N)$ is also outer-labeled planar and induces the order $\CC$. Then, $\Split (T)$ is a circular split system congruent with $\CC$ for all $T \in \T(N)$ (see, e.g., \cite{Gambette2012}). 
Hence, $\Split (\T (N)) = \bigcup_{T \in \T(N)} \Split (T)$ is a circular split system congruent with $\CC$.
\end{proof}

\subsection{Quartet metric of a phylogenetic tree}\label{subsec:quartet_metric}
We revisit the metric from \cite{rhodes2019topological}, which was defined on the leaves of a phylogenetic tree based on the quartets induced by that tree.

Let $T$ be a phylogenetic tree on four leaves $\{x,y,z,w\}$, i.e.~$T$ is a \emph{quartet tree}. Then, for any pair of leaves $\{x,y\}$, we let
\begin{equation}\label{eq:rho_tree}
\rho_{xy}(T) = \begin{cases}
0 &\text{if } x \text{ and } y \text{ form a cherry},\\
1 &\text{otherwise}.\\
\end{cases}
\end{equation}
Note that $\rho_{xy} (T) = \delta_{S} (x,y)$, where $S$ is the unique non-trivial split induced by $T$.

We can use this concept to define a metric $d_T$ for any phylogenetic tree $T$ with at least four leaves. Recall that $T|_{xyzw}$ denotes the quartet tree on leaf set $\{x,y,z,w\}$ induced by the tree $T$.

\begin{definition}[Quartet metric]\label{def:quartet_metric}
    Let $T$ be a phylogenetic tree on leaf set $X$ with $n = |X| \geq 4$. For any pair of leaves $x \neq y$ of $X$ let
    \begin{equation}
        d_T(x,y)= \sum_{z,w\neq x,y} 2 \cdot \rho_{xy}(T|_{xyzw})+2n-4,
    \end{equation}
    with $d_T(x,x) = 0$. Then, $d_T$ is the \emph{quartet metric} of $T$.
\end{definition}

A \emph{metrization} of a tree $T = (V,E)$ is a map $w: E \rightarrow \mathbb{R}_{> 0}$ that assigns a positive weight to each edge. Next, we define the \emph{quartet metrization} of a phylogenetic tree $T$ on $X$, which assigns a weight $w(e)$ to every edge $e$ of $T$. First note that each internal edge $e$ of $T$ determines a partition of $X$ into 4 non-empty blocks ($X_1$, $X_2$, $X_3$, $X_4$) where the split associated to $e$ is $X_1 \cup X_2 | X_3 \cup X_4$ and the splits associated to the 4 adjacent edges all lead to one $X_i$. To obtain the quartet metrization now assign to every internal edge $e$ the weight
\begin{equation}\label{eq:quartet_metrization1}
    w(e) = |X_1||X_2| + |X_3||X_4|.
\end{equation}
Similarly, every pendant edge, incident to a leaf $x$, induces a tripartition $(\{x\} , X_1 , X_2)$ of $X$. To every such edge $e$ we assign the weight
\begin{equation}\label{eq:quartet_metrization2}
w(e) = |X_1||X_2|.
\end{equation}
The following inequality will be useful later in this manuscript.

\begin{proposition}\label{prop:edge_inequality_weight}
    Let $T = (V,E)$ be a phylogenetic tree on leaf set $X$ with $ |X| \geq 4$ and suppose that $T$ is equipped with the quartet metrization~$w : E \rightarrow \mathbb{R}_{\geq 0}$. Then, for every edge $e$ in $T$, we have $w(e) \geq |X| - 2$.
\end{proposition}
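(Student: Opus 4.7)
The plan is to split into the two cases (pendant edge and internal edge) defined in \cref{eq:quartet_metrization1,eq:quartet_metrization2}, and in each case reduce the inequality to an elementary statement about positive integers summing to a fixed value. Throughout, I write $n = |X|$ and exploit that $T$ is binary, so all blocks of the relevant partitions are non-empty (i.e., have size at least $1$).

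For a pendant edge $e$ incident to a leaf $x$, the tripartition is $(\{x\}, X_1, X_2)$ with $|X_1| + |X_2| = n - 1$ and $|X_1|, |X_2| \geq 1$. By \cref{eq:quartet_metrization2}, $w(e) = |X_1| |X_2|$. Among positive integers with a fixed sum $n-1$, the product $|X_1| |X_2|$ is minimized at the boundary (one of the parts equal to $1$), giving $w(e) \geq 1 \cdot (n-2) = n-2$. This case is routine.

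For an internal edge $e$ with partition $(X_1, X_2, X_3, X_4)$, set $a_i = |X_i|$, so $a_1 + a_2 + a_3 + a_4 = n$ with each $a_i \geq 1$, and $w(e) = a_1 a_2 + a_3 a_4$ by \cref{eq:quartet_metrization1}. The key algebraic observation is the identity
\begin{equation*}
    (a_1 - 1)(a_2 - 1) + (a_3 - 1)(a_4 - 1) = a_1 a_2 + a_3 a_4 - (a_1 + a_2 + a_3 + a_4) + 2 = w(e) - n + 2.
\end{equation*}
Since each $a_i \geq 1$, both products $(a_1-1)(a_2-1)$ and $(a_3-1)(a_4-1)$ are non-negative, so $w(e) - n + 2 \geq 0$, which gives $w(e) \geq n - 2$ as required.

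There is no real obstacle: the proof is a short case analysis, and the only mild trick is the rewriting $w(e) - (n-2) = (a_1-1)(a_2-1) + (a_3-1)(a_4-1)$ in the internal-edge case, which makes the non-negativity immediate.
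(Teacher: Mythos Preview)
Your proof is correct and follows essentially the same two-case structure as the paper's own proof; the paper merely states the elementary inequalities $n_1 n_2 \geq n_1 + n_2 - 1$ and $n_1 n_2 + n_3 n_4 \geq \sum n_i - 2$ for positive integers, while you supply the clean algebraic justification via $(a_1-1)(a_2-1)+(a_3-1)(a_4-1)\geq 0$.
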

\begin{proof} 
First suppose that $e$ is a pendant edge inducing the tripartition $(\{x\} , X_1 , X_2)$ of $X$. Then, $w(e) = |X_1| |X_2|$. The inequality now follows from the fact that $n_1 \cdot n_2 \geq n_1 + n_2 - 1$ if $n_1, n_2 \geq 1$.

Now suppose $e$ is internal inducing the partition $(X_1$, $X_2$, $X_3$, $X_4)$. Then, $w(e) = |X_1||X_2| + |X_3||X_4|$. Again, the inequality follows because $n_1 n_2 + n_3 n_4 \geq \sum_{i=1}^4 n_i - 2$ whenever all $n_i \geq 1$.
\end{proof}

The relationship of the quartet metric defined in \cref{def:quartet_metric} and the quartet metrization is captured in the following.

\begin{theorem}[\cite{rhodes2019topological}]\label{thm:quartet-metric}
    Let $T = (V,E)$ be a phylogenetic tree on leaf set $X$ with $|X| \geq 4$. Then, for every pair $\{x,y\} \subseteq X$, the quartet metric $d_T (x,y)$ is the distance in $T$ between $x$ and $y$ when $T$ is equipped with the quartet metrization $w: E \rightarrow \mathbb{R}_{\geq 0}$.
\end{theorem}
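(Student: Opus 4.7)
The plan is to reduce both $d_T(x,y)$ and $\sum_{e \in P_{xy}} w(e)$, where $P_{xy}$ denotes the unique $xy$-path in $T$, to the same closed form depending only on the sizes of the subtrees hanging off $P_{xy}$. To set this up, I would write the $xy$-path as $x = v_0, v_1, \ldots, v_k = y$, with edges $e_i = (v_{i-1}, v_i)$. Since $T$ is binary, each internal path node $v_i$ with $1 \le i \le k-1$ has exactly one neighbor off the path; let $Y_i$ be the leaf set of the subtree rooted at that neighbor, and write $n_i = |Y_i|$, so that $\sum_{i=1}^{k-1} n_i = n-2$ where $n = |X|$.

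For the quartet-metric side, I would use the standard observation that for distinct $z,w \in X \setminus \{x,y\}$, the induced quartet $T|_{xyzw}$ has $xy$ as a cherry (equivalently, $\rho_{xy}(T|_{xyzw}) = 0$) if and only if the $zw$-path in $T$ is edge-disjoint from the $xy$-path, which happens exactly when $z$ and $w$ lie in the same $Y_i$. This makes the number of pairs $\{z,w\}$ with $\rho_{xy}(T|_{xyzw}) = 1$ equal to $\binom{n-2}{2} - \sum_{i=1}^{k-1} \binom{n_i}{2}$, and substituting this into \cref{def:quartet_metric} gives, after a short simplification,
\[
d_T(x,y) \;=\; n(n-2) - \sum_{i=1}^{k-1} n_i^2.
\]

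For the path-distance side, I would expand the quartet-metrization weights along $P_{xy}$ directly from their defining formulas. The pendant edges $e_1$ and $e_k$ contribute $n_1(n-1-n_1)$ and $n_{k-1}(n-1-n_{k-1})$ respectively, while each internal edge $e_i$ with $2 \le i \le k-1$ contributes $p_{i-1} n_{i-1} + n_i q_{i+1}$, where $p_j := 1 + \sum_{\ell < j} n_\ell$ and $q_j := n - p_j$ record the sizes of the two sides of $e_j$. Re-indexing the resulting sums and applying the identity $p_j + q_{j+1} = n - n_j$ then collapses everything into $\sum_{i=1}^{k-1} n_i (n - n_i) = n(n-2) - \sum_{i=1}^{k-1} n_i^2$, matching the quartet-metric side. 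The cherry case $k=2$---a single hanging subtree of size $n-2$ and no internal edges on the path---is covered uniformly by the same formulas.

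The main obstacle I expect is the careful bookkeeping of the four blocks $X_1, X_2, X_3, X_4$ attached to each internal edge $e_i$, in particular correctly identifying which pair lies on the $x$-side of $e_i$ and which on the $y$-side, since this is exactly what makes the pendant and internal contributions combine cleanly via the telescoping identity above. Once this setup is in place, both derivations are elementary.
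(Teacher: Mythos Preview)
The paper does not supply its own proof of this theorem: it is quoted as a result of Rhodes \cite{rhodes2019topological} and used as a black box, so there is no in-paper argument to compare your proposal against.

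That said, your proposal is correct and complete. The key combinatorial identification --- that $\rho_{xy}(T|_{xyzw})=0$ exactly when $z$ and $w$ fall into the same component $Y_i$ hanging off the $xy$-path --- is the right reduction, and your algebra on both sides checks out: writing $p_j=1+\sum_{\ell<j}n_\ell$ and $q_j=n-p_j$, the pendant edges contribute $n_1 q_2$ and $n_{k-1}p_{k-1}$, the internal edges contribute $\sum_{i=2}^{k-1}(p_{i-1}n_{i-1}+n_i q_{i+1})$, and after re-indexing these collapse via $p_j+q_{j+1}=n-n_j$ to $\sum_{j=1}^{k-1} n_j(n-n_j)=n(n-2)-\sum n_j^2$, matching the quartet-metric side. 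The boundary case $k=2$ is handled by the same formulas (the internal-edge sum is empty, and both pendant edges have weight $n-2$), so no separate treatment is really needed. Your anticipated obstacle --- getting the $X_1,X_2,X_3,X_4$ labeling right so that the two ``halves'' of each internal-edge weight telescope against the correct neighboring pendant contributions --- is exactly the point where care is required, and your $p_j,q_j$ bookkeeping handles it cleanly.
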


\begin{corollary}\label{cor:quartet_decomposable}
    Let $T$ be a phylogenetic tree on at least four leaves. Then, the quartet metric $d_T$ is circular decomposable and its support is $\Split (T)$.
\end{corollary}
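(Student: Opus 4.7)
The plan is to combine \cref{thm:quartet-metric} with the well-known fact that any additive tree metric decomposes as a positive combination of split metrics (one per edge), and then invoke \cref{prop:splits_are_circular} to obtain circularity.

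First, I would apply \cref{thm:quartet-metric} to view $d_T(x,y)$ as the path distance between $x$ and $y$ in $T$ under the quartet metrization $w$. Since there is a unique path between any two leaves of a tree, this path distance expands as
\[
d_T(x,y) \;=\; \sum_{e \in E} w(e)\cdot \delta_{S_e}(x,y),
\]
where $S_e \in \Split(T)$ denotes the split of $X$ induced by the cut edge $e$. Setting $\omega(S_e) = w(e)$ for each $e \in E$ and $\omega(S) = 0$ for all other $S \in \Split(X)$ exhibits $d_T = d_\omega$ for a weighted split system whose candidate support is $\{S_e : e \in E\} = \Split(T)$.

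Next, I would invoke \cref{prop:edge_inequality_weight} to confirm that $w(e) \geq |X| - 2 \geq 2 > 0$ for every edge $e$, so $\omega(S_e) > 0$ for each edge and therefore $\supp(\omega) = \Split(T)$ exactly. It remains to check that $\omega$ is circular. Every phylogenetic tree admits an outer-labeled planar drawing, so viewing $T$ as an outer-labeled planar semi-directed network (with no hybrid nodes, so $\T(T) = \{T\}$), \cref{prop:splits_are_circular} yields that $\Split(T) = \Split(\T(T))$ is a circular split system congruent with any circular order of $X$ induced by $T$. Thus $\omega$ is a circular weighted split system with $d_T = d_\omega$, establishing that $d_T$ is circular decomposable with support $\Split(T)$.

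There is no real obstacle here: the proof is essentially a packaging of \cref{thm:quartet-metric} together with \cref{prop:edge_inequality_weight} (to guarantee strict positivity, hence the precise identification of the support) and \cref{prop:splits_are_circular} (to guarantee circularity). The only tiny care point is to double-check that $|X| \geq 4$ is actually needed, which it is, both to apply \cref{thm:quartet-metric,prop:edge_inequality_weight} and to ensure $w(e) > 0$ uniformly.
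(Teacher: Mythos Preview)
Your proposal is correct and matches the paper's intended argument: the corollary is stated without proof immediately after \cref{thm:quartet-metric}, precisely because the decomposition of an additive tree metric into positively weighted split metrics (one per edge), together with the circularity of $\Split(T)$, is exactly the reasoning you spell out. Your explicit use of \cref{prop:edge_inequality_weight} for strict positivity and \cref{prop:splits_are_circular} for circularity simply makes the implicit steps visible.
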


\section{Displayed quartet metric of a phylogenetic network}\label{sec:displ_quartet_metric}

Here we generalize the quartet metric for a tree to semi-directed networks, accounting for the multiplicities of the quartets. 
Although at first glance this generalization seems primarily of theoretical value since it is unclear whether the quartet multiplicities --- and thus the resulting distances --- can be computed directly from data, the results here are linked in the next section to a second generalization where the generalized
distances can be estimated.

Let $N$ be a semi-directed network on four leaves $\{x,y,z,w\}$, i.e., a quarnet, (possibly a tree). Then, the first step in generalizing the quartet metric is to use the set $\T (N)$ of quartet trees displayed by $N$, accounting for their multiplicities, to obtain
\begin{equation}\label{eq:rhotilde}
    \rhotilde_{xy} (N) = \frac{1}{\mu (N)} \sum_{T \in \T (N)} \mu (T, N) \cdot \rho_{xy} (T).
\end{equation}
That is, the $\rhotilde_{xy}$-value for a pair of leaves~$(x,y)$ of a quarnet~$N$ is simply the average $\rho_{xy}$-values for all the quartet trees that $N$ displays, weighted by multiplicity.
If $N$ is a quartet tree, this reduces to~\cref{eq:rho_tree}. In case the quarnet is strictly level-1, we obtain that $\mu (N) =2$ and $\mu (T, N) = 1$, resulting in $\rho$-values of either $1/2$ or $1$. This exactly coincides with the $\rho$-values used for NANUQ \cite{allman2019nanuq}.

\Cref{fig:rhodistance_tilde} illustrates 
pairwise $\rhotilde$-values for several outer-labeled planar, galled quarnets. For a quarnet to be galled, every hybrid node must be parental to a leaf, but otherwise the hybrid nodes may be chosen arbitrarily in these examples, as long as their choice results in a valid semi-directed network. Thus, for the quarnet in the second column of \cref{fig:rhodistance_tilde}, exactly one of $\{x, y, z, w\}$ needs to be a hybrid, whereas for the third column exactly one of $\{x, y\}$ and one of $\{z,w\}$ need to be hybrids. As an example, note that --- independent of the choice of hybrids --- the quarnet in the third column displays the quartet $xy|zw$ with multiplicity~3, the quartet $xz|yw$ with multiplicity~1, and the quartet $xw|yz$ with multiplicity~0, resulting in the shown $\rhotilde$-values of $1 / 4$, $3/ 4$ and $1$.
Because contracting a 2-blob, or contracting either a 3-blob or a 3-cycle within a blob to a node, has no impact on displayed tree topologies, the values of $\rhotilde$ hold slightly more generally.

\begin{figure}[htb]
    \centering
 \includegraphics[width=0.95\textwidth]{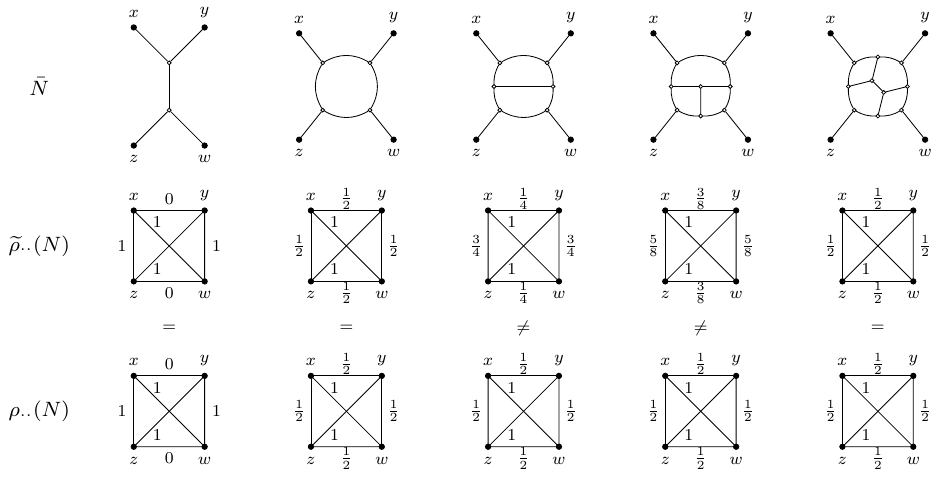}
    \caption{\emph{Top:} Five undirected graphs $\bar{N}$ that can be obtained by undirecting an outer-labeled planar, galled quarnet~$N$ on leaf set $\{x,y,z,w\}$ after contracting 2-blobs, 3-blobs, and 3-cycles within a blob. To make $\bar{N}$ semi-directed again (disregarding the creation of 2-blobs, 3-blobs and 3-cycles), hybrid nodes can be chosen only from the articulation nodes of the undirected blobs since $N$ must be galled, as long as $N$ remains semi-directed. \emph{Middle:} The values $\rhotilde_{..} (N)$ (as defined in \cref{eq:rhotilde}) in case $N$ is outer-labeled planar, galled and $\bar{N}$ is as in the top row are shown on edges connecting two taxa. \emph{Bottom:} The values $\rho_{..} (N)$ (as defined in \cref{eq:rho}) in case $N$ is outer-labeled planar, galled and $\bar{N}$ is as in the top row.}
    \label{fig:rhodistance_tilde}
\end{figure}

As was the case for phylogenetic trees, this newly defined $\rhotilde$ gives rise to a metric on the leaf set of an $n$-leaf semi-directed network. Recall that $N|_{xyzw}$ is the induced quarnet on the leaf set $\{x,y,z,w\}$ of a semi-directed network $N$.
\begin{definition}[Displayed quartet metric]\label{def:displayed_quartet_metric}
    Let $N$ be a semi-directed network on leaf set $X$ with $n = |X| \geq 4$. For any pair of leaves $x \neq y$ of $X$ we let
    \begin{equation}
        \dtilde _N(x,y)= \sum_{z,w\neq x,y} 2 \cdot \rhotilde_{xy}(N|_{xyzw})+2n-4,
    \end{equation}
    with $\dtilde_N(x,x) = 0$. Then, $\dtilde_N$ is the \emph{displayed quartet metric} of $N$.
\end{definition}

Recall that the $\rhotilde$-value of a quarnet is the average over the $\rhotilde$-values of each quartet it displays (\cref{eq:rhotilde}). The following now generalizes Lemma~21 from \cite{allman2019nanuq}. Informally, it says that the $\rhotilde$-value of a quarnet of a network $N$ can equivalently be viewed as a weighted average over $N$'s displayed trees of the $\rho$-values of their induced quartets.

\begin{lemma}\label{lem:dqm}
    Let $N$ be a semi-directed network on leaf set $X$ with $|X| \geq 4$. If $\{x,y,z,w\} \subseteq X$, then
    $$\rhotilde_{xy} (N|_{xyzw}) = \frac{1}{\m (N)} \sum_{T \in \T (N)}  \m (T, N) \cdot \rho_{xy} (T|_{xyzw}) . $$ 
\end{lemma}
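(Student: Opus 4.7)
The plan is to express both sides of the identity as normalized sums over choices of which hybrid edge to delete at each hybrid node, and then match them via a uniformly many-to-one correspondence between such choices in $N$ and in $N|_{xyzw}$.

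\textbf{Step 1: Passage to hybrid-edge choices.} Let $r$ and $r'$ denote the numbers of hybrid nodes in $N$ and $N|_{xyzw}$, so that $\mu(N)=2^r$ and $\mu(N|_{xyzw})=2^{r'}$. By the definition of $\mu(T,N)$,
\[
\sum_{T\in\T(N)}\mu(T,N)\cdot\rho_{xy}(T|_{xyzw}) \;=\; \sum_{\sigma}\rho_{xy}(T_\sigma|_{xyzw}),
\]
where the sum runs over all $2^r$ hybrid-edge deletion choices $\sigma$ in $N$ and $T_\sigma$ is the resulting displayed tree. Analogously, $\sum_{Q\in\T(N|_{xyzw})}\mu(Q,N|_{xyzw})\cdot\rho_{xy}(Q) = \sum_{\sigma'}\rho_{xy}(Q_{\sigma'})$ over the $2^{r'}$ choices $\sigma'$ in $N|_{xyzw}$. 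The identity to prove thus reduces to equality of these two uniform averages of $\rho_{xy}$.

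\textbf{Step 2: Commutativity of displaying and restricting.} The key claim is that for every choice $\sigma$ in $N$, $T_\sigma|_{xyzw} = Q_{\pi(\sigma)}$, where $\pi(\sigma)$ is the choice in $N|_{xyzw}$ obtained by restricting $\sigma$ to those hybrid nodes of $N$ that persist as hybrid nodes after restriction to $\{x,y,z,w\}$. Informally, it does not matter whether one first displays and then restricts, or first restricts and then displays. This follows because the operations involved --- deleting hybrid edges, recursively deleting leaves not in $X$, taking the union of up-down paths among $\{x,y,z,w\}$, and suppressing degree-$2$ nodes --- can be performed in either order without changing the resulting quartet topology. Any hybrid node of $N$ that does not persist in $N|_{xyzw}$ either lies on no up-down path between two leaves of $\{x,y,z,w\}$ (so its deletion choice is irrelevant to $T_\sigma|_{xyzw}$) or has only one of its two in-edges on such a path (so both possible deletions at this node collapse, after suppression of degree-$2$ nodes, to the same restricted topology).

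\textbf{Step 3: Uniform fibers and conclusion.} The map $\pi$ is surjective onto the set of hybrid-edge choices in $N|_{xyzw}$, with every fiber of size exactly $2^{r-r'}$, since the $r-r'$ hybrid nodes of $N$ not persisting in $N|_{xyzw}$ contribute independent binary choices that do not affect $\pi(\sigma)$. Combining with Step 2,
\[
\sum_\sigma \rho_{xy}(T_\sigma|_{xyzw}) \;=\; 2^{r-r'}\sum_{\sigma'}\rho_{xy}(Q_{\sigma'}),
\]
and dividing by $2^r = 2^{r'}\cdot 2^{r-r'}$ yields the claimed identity. \textbf{The main obstacle} lies in justifying the commutativity claim of Step 2 rigorously: it requires a careful case analysis of each hybrid node of $N$ according to which of its incident edges lie on up-down paths between leaves of $\{x,y,z,w\}$, together with a verification that the subsequent degree-$2$ suppressions match in either order of operation. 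In particular, in the sub-case where only one in-edge of a hybrid node is used, one must check that both deletion choices at that node produce the same restricted tree, which is precisely what makes the fibers of $\pi$ uniformly sized.
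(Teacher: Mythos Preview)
Your proposal is correct and follows essentially the same approach as the paper: both arguments count hybrid-edge deletion choices, use that $\mu(N)=2^r$ and $\mu(N|_{xyzw})=2^{r'}$, and observe that each choice in $N|_{xyzw}$ corresponds to exactly $2^{r-r'}$ choices in $N$. The paper compresses your Steps~2--3 into the single asserted identity $\mu(T',N|_{xyzw}) = 2^{r'-r}\sum_{T:\,T|_{xyzw}=T'}\mu(T,N)$ without further justification, whereas you spell out the underlying surjection $\pi$ and its uniform fibers; your version is thus a slightly more detailed rendering of the same argument.
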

\begin{proof} 

Suppose that $N$ has $r$ hybrid nodes and that $N|_{xyzw}$ has $p\leq r$ hybrid nodes. Then, $\mu(N|_{xyzw}) = 2^p$ and $\mu (N) = 2^r$, so that $$\mu(N|_{xyzw}) = 2^{p-r} \mu(N),$$
and for any quartet tree $T'$ on $\{x,y,z,w\}$, $$\mu(T',N|_{xyzw}) = 2^{p-r} \sum_{\substack{T\in\T(N):\\ T|_{xyzw}=T'}} \mu(T,N).$$
Thus,
\begin{align*}
    \rhotilde_{xy} (N|_{xyzw}) &= \frac{1}{\mu (N|_{xyzw})} \sum_{T' \in \T (N|_{xyzw})} \mu (T', N|_{xyzw}) \cdot \rho_{xy} (T')\\
    &=\frac{1}{2^{p-r}\mu (N)} \sum_{T'\in \T (N|_{xyzw})} 2^{p-r}\sum_{\substack{T\in\T(N):\\ T|_{xyzw}=T' }}\mu (T, N) \cdot \rho_{xy} (T|_{xyzw})\\
    &=\frac{1}{ \mu (N)}  \sum_{T\in\T(N)}\mu (T, N) \cdot \rho_{xy} (T|_{xyzw}).
\end{align*}  
\end{proof}

The next theorem generalizes Theorem 23 of \cite{allman2019nanuq}, showing that the displayed quartet metric on a network is simply a weighted sum of the metrics on its displayed trees.
\begin{theorem}\label{thm:displ_quartet_metric_sum}
    Let $N$ be a semi-directed network on leaf set $X$ with $|X| \geq 4$ and let $\{x,y\} \subseteq X$. Then,
    $$\dtilde_N(x,y) = \frac{1}{\m (N)} \sum_{T \in \T(N)} \m (T, N)\cdot d_T (x,y) .$$
\end{theorem}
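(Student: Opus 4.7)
The plan is to reduce the statement directly to \cref{lem:dqm} together with linearity of summation. Starting from the definition of $\dtilde_N(x,y)$, I would first substitute the identity from \cref{lem:dqm} for each $\rhotilde_{xy}(N|_{xyzw})$ appearing in the sum, rewriting
\[
\dtilde_N(x,y) = \sum_{z,w\neq x,y} 2\cdot \frac{1}{\m(N)}\sum_{T\in\T(N)} \m(T,N)\cdot \rho_{xy}(T|_{xyzw}) + (2n-4).
\]
Then I would swap the order of the two finite summations, pulling the $(1/\m(N))\sum_{T\in\T(N)}\m(T,N)$ outside the sum over pairs $\{z,w\}$.

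Next I would absorb the additive constant $(2n-4)$ into the displayed-tree sum. The key algebraic observation is that $\sum_{T\in\T(N)}\m(T,N) = \m(N)$ by the definition of $\m(N)$, so
\[
2n-4 = \frac{1}{\m(N)}\sum_{T\in\T(N)} \m(T,N)\cdot(2n-4).
\]
Combining this with the rewritten first term gives
\[
\dtilde_N(x,y) = \frac{1}{\m(N)}\sum_{T\in\T(N)} \m(T,N)\cdot \left(\sum_{z,w\neq x,y} 2\cdot\rho_{xy}(T|_{xyzw}) + (2n-4)\right),
\]
and the inner parenthesized expression is exactly $d_T(x,y)$ by \cref{def:quartet_metric}, yielding the claimed identity.

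There is essentially no obstacle here once \cref{lem:dqm} is in hand; the proof is just an interchange of summations plus bookkeeping on the constant term. The only mild subtlety is making sure the sums range over consistent objects (the sum over $\{z,w\}$ unordered pairs versus ordered, and the identification $\T(N|_{xyzw})$ implicit in $\rho_{xy}(T|_{xyzw})$ even when $T|_{xyzw}$ has a cherry versus not), but these are already handled by the definitions given in the paper, so the argument is a few lines.
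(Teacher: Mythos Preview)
Your proposal is correct and follows essentially the same approach as the paper: substitute \cref{lem:dqm} into the definition of $\dtilde_N$, interchange the two finite sums, absorb the constant $2n-4$ via $\sum_{T\in\T(N)}\m(T,N)=\m(N)$, and recognize the inner expression as $d_T(x,y)$ from \cref{def:quartet_metric}.
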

\begin{proof}
Let $n = |X|$. Using \cref{def:quartet_metric,def:displayed_quartet_metric}, \cref{lem:dqm}, and the equality $\mu (N) = \sum_{T \in \T (N)} \mu (T, N)$, we obtain
\begin{align*}
    \dtilde_N(x,y) &\eqover{\ref{def:displayed_quartet_metric}} \sum_{z,w\neq x,y} 2 \cdot \rhotilde_{xy}(N|_{xyzw})+2n-4 \\
    &\eqover{\ref{lem:dqm}} \sum_{z,w\neq x,y} 2 \cdot \left[ \frac{1}{\mu (N)} \sum_{T \in \T (N)} \mu (T, N)\cdot \rho_{xy} (T|_{xyzw}) \right]+2n-4 \\
    &= \frac{1}{\mu (N)} \sum_{T \in \T (N)} \mu (T,N)\ \left[ \sum_{z,w\neq x,y} 2   \rho_{xy} (T|_{xyzw})  \right] +2n - 4\\
    &\eqover{\ref{def:quartet_metric}} \frac{1}{\mu (N)} \sum_{T \in \T (N)} \mu (T, N)\cdot d_T (x,y).
\end{align*}
\end{proof}

This theorem shows that the metric $\dtilde_N$ is a weighted sum of quartet metrics, with nonnegative weights. Recall that from \cref{cor:quartet_decomposable}, each of these metrics is circular
decomposable and has the induced splits of the corresponding tree as support. Together with the fact that $\Split (\T (N))$ is a circular split system if $N$ is outer-labeled planar (\cref{prop:splits_are_circular}) and by the uniqueness of the
split decomposition of a circular decomposable metric \cite{bandelt1992canonical}, we obtain the following result.

\begin{corollary}\label{cor:displ_quartet_circular}
    Let $N$ be a semi-directed, outer-labeled planar network on at least four leaves. Then, the displayed quartet metric $\dtilde_N$ is circular decomposable, with support $\Split (\T (N))$.
\end{corollary}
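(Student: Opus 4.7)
The plan is to assemble three earlier results that together furnish $\dtilde_N$ with an explicit circular split decomposition. Specifically, \cref{thm:displ_quartet_metric_sum} expresses $\dtilde_N$ as a nonnegative weighted sum of tree quartet metrics $d_T$ over $T \in \T(N)$; \cref{cor:quartet_decomposable} gives each $d_T$ a circular decomposition with support $\Split(T)$; and \cref{prop:splits_are_circular} ensures $\Split(\T(N))$ is a circular split system when $N$ is outer-labeled planar.

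First I would apply \cref{thm:displ_quartet_metric_sum} to write $\dtilde_N = \sum_{T \in \T(N)} \lambda_T \, d_T$ with coefficients $\lambda_T = \mu(T, N)/\mu(N) > 0$. Substituting the circular decomposition $d_T = \sum_{S} \omega_T(S) \, \delta_S$ from \cref{cor:quartet_decomposable} and exchanging the order of summation gives $\dtilde_N = \sum_{S} \omega(S) \, \delta_S$, where $\omega(S) := \sum_{T \in \T(N)} \lambda_T \, \omega_T(S)$. Because every $\lambda_T$ is strictly positive and each $\omega_T$ is strictly positive on $\Split(T)$ (with weights explicitly given by \cref{eq:quartet_metrization1,eq:quartet_metrization2}) and zero elsewhere, no cancellation can occur among nonnegative contributions, so $\supp(\omega) = \bigcup_{T \in \T(N)} \Split(T) = \Split(\T(N))$.

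By \cref{prop:splits_are_circular}, this support lies within a common circular split system $\SSS(\CC)$ for any outer-labeled planar circular order $\CC$ of $N$, so $\omega$ is a circular weighted split system and hence $\dtilde_N = d_\omega$ is circular decomposable with the claimed support; the uniqueness of the circular split decomposition due to Bandelt and Dress~\cite{bandelt1992canonical} then identifies this as \emph{the} canonical decomposition. I do not foresee a real obstacle: the corollary essentially packages three already-established ingredients. The only point requiring care is the nonnegativity of all coefficients, which is what rules out accidental cancellation when combining per-tree decompositions into a global one and thus guarantees equality of supports.
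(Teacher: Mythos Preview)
Your proposal is correct and follows essentially the same route as the paper: combine \cref{thm:displ_quartet_metric_sum}, \cref{cor:quartet_decomposable}, and \cref{prop:splits_are_circular}, then invoke the uniqueness of the split decomposition from~\cite{bandelt1992canonical}. Your explicit remark that strict positivity of the $\lambda_T$ and of the per-tree split weights rules out cancellation (so that $\supp(\omega)=\bigcup_T \Split(T)$) is exactly the point the paper glosses over when it simply says ``nonnegative weights,'' so if anything you have spelled out the support claim a bit more carefully.
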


\section{NANUQ metric of a phylogenetic network}\label{sec:NANUQ_metric}
The displayed quartet metric $\dtilde_N$ of a network~$N$ presented in the last section has a straightforward connection to the quartet metrics of displayed trees. However, one of its components, the multiplicity of a displayed quartet, is not likely to be obtainable from data. In this section, we propose a different network generalization of the quartet tree metric that does not account for these multiplicities.

First, we introduce a different generalization of $\rho$ from \cref{subsec:quartet_metric}. Let $N$ be a semi-directed network on four leaves $\{x,y,z,w\}$ (which could be a quartet tree), i.e., $N$ is a \emph{quarnet}. Then, in contrast to $\rhotilde$ of the previous section, we ignore multiplicities and instead define
\begin{equation}\label{eq:rho}
    \rho_{xy} (N) = \frac{1}{|\T(N)|} \sum_{T \in \T (N)} \rho_{xy} (T).
\end{equation}
That is, $\rho$ for a pair of leaves of a quarnet is the \emph{unweighted} average $\rho$ of all the quartet trees 
$N$ displays. In case $N$ is strictly level-1, $|\T(N)| = 2$, and so this new extension coincides exactly with the $\rhotilde$-value defined in \cref{eq:rhotilde}, and hence also with the $\rho$ that is used for NANUQ \cite{allman2019nanuq}. For other levels, this newly defined $\rho$ and the $\rhotilde$ from \cref{eq:rhotilde} are generally different (see \Cref{fig:rhodistance_tilde}).

As before, $\rho$ induces a corresponding metric on the leaves of a semi-directed network with four or more leaves.
Since this metric will be the main generalization of the metric in \cite{allman2019nanuq}, we call it the \emph{NANUQ metric}.
\begin{definition}[NANUQ metric]\label{def:NANUQ_metric}
    Let $N$ be a semi-directed network on leaf set $X$ with $n = |X| \geq 4$. For any pair of leaves $x \neq y$ of $X$, let
    \begin{equation}
        d _N(x,y)= \sum_{z,w\neq x,y} 2 \cdot \rho_{xy}(N|_{xyzw})+2n-4,
    \end{equation}
    with $d_N(x,x) = 0$. Then, $d_N$ is the \emph{NANUQ metric} of $N$.
\end{definition}

An immediate consequence of the definition and \Cref{thm:displ_quartet_metric_sum} is the following, expressing the NANUQ metric as a sum of quartet metrics of the displayed trees of the network, plus an error term.
\begin{lemma}\label{lem:NANUQ_error_general}
    Let $N$ be a semi-directed network on leaf set $X$ with $|X| \geq 4$ and let $\{x,y\} \subseteq X$. Then,
    $$d_N(x,y) = \frac{1}{\mu (N)} \sum_{T \in \T(N)} d_T (x,y) \cdot \mu (T, N) + \varepsilon (x,y),$$
    where $\varepsilon (x,y) = 2 \cdot \sum_{z,w \neq x,y} ( \rho_{xy} (N|_{xyzw}) - \rhotilde_{xy} (N|_{xyzw}) ).$
\end{lemma}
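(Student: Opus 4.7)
The plan is essentially to add and subtract $\dtilde_N(x,y)$ and invoke the identity already established in \cref{thm:displ_quartet_metric_sum}. Since both $d_N$ and $\dtilde_N$ are defined by the same outer sum over pairs $\{z,w\}$ plus the same constant $2n-4$, the only difference between them pair-by-pair is the replacement of $\rhotilde_{xy}(N|_{xyzw})$ by $\rho_{xy}(N|_{xyzw})$.

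Concretely, I would first unfold \cref{def:NANUQ_metric} and \cref{def:displayed_quartet_metric} and compute their difference directly:
\begin{equation*}
d_N(x,y) - \dtilde_N(x,y) = \sum_{z,w \neq x,y} 2\bigl(\rho_{xy}(N|_{xyzw}) - \rhotilde_{xy}(N|_{xyzw})\bigr) = \varepsilon(x,y).
\end{equation*}
Rearranging gives $d_N(x,y) = \dtilde_N(x,y) + \varepsilon(x,y)$. Then I would apply \cref{thm:displ_quartet_metric_sum} to rewrite $\dtilde_N(x,y)$ as $\frac{1}{\mu(N)}\sum_{T\in\T(N)}\mu(T,N)\cdot d_T(x,y)$, yielding the claimed identity.

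There is no real obstacle here — the lemma is a formal consequence of the parallel structure of the two definitions and the preceding theorem. The only thing worth being careful about is that the constants $2n-4$ in the two definitions are identical, so they cancel in the subtraction; once that is observed, the computation is immediate and the error term $\varepsilon(x,y)$ literally \emph{is} the sum of the pairwise discrepancies between $\rho$ and $\rhotilde$ on the induced quarnets.
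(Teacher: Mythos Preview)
Your proposal is correct and is essentially identical to the paper's proof: the paper also adds and subtracts $\rhotilde_{xy}(N|_{xyzw})$ inside the defining sum for $d_N$, recognizes the $\rhotilde$-part (plus the constant $2n-4$) as $\dtilde_N(x,y)$, and then invokes \cref{thm:displ_quartet_metric_sum}. The only difference is cosmetic ordering of the steps.
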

\begin{proof}
Let $n = |X|$. Then,
    \begin{align*}
        d_N(x,y) &= \sum_{z,w\neq x,y} 2 \cdot \rho_{xy}(N|_{xyzw})+2n-4 \\
        &= \sum_{z,w\neq x,y} 2 \cdot ( \rhotilde_{xy}(N|_{xyzw}) 
 + \rho_{xy}(N|_{xyzw})- \rhotilde_{xy}(N|_{xyzw}))+2n-4 \\
 &= \dtilde_N (x,y) + \sum_{z,w\neq x,y} 2 \cdot ( \rho_{xy}(N|_{xyzw})- \rhotilde_{xy}(N|_{xyzw})) \\
 &= \frac{1}{\mu (N)} \sum_{T \in \T(N)} d_T (x,y) \cdot \mu (T, N) + 2 \sum_{z,w\neq x,y} ( \rho_{xy}(N|_{xyzw})- \rhotilde_{xy}(N|_{xyzw})) \\
 &= \frac{1}{\mu (N)} \sum_{T \in \T(N)} d_T (x,y) \cdot \mu (T, N) + \varepsilon (x,y).
    \end{align*}
\end{proof}

\subsection{NANUQ metric is circular decomposable for level-2 bloblet networks}
In the rest of this section, we restrict to networks in the class $\BB_2$ with at least four leaves and show that the NANUQ metric is circular decomposable for this class. Recall that this class contains all semi-directed, level-2, outer-labeled planar, galled, bloblet networks. Furthermore, $\BB'_2$ contains all strictly level-2 networks in this class (see \Cref{fig:level2network} for a general example of a network in $\BB'_2$). 
Note that parallel edges (2-cycles) only appear as a 2-blob for a network in $\NN_2$, and since a $k$-taxon bloblet in $\BB_2 \subseteq \NN_2$ with $k \ge 3$ cannot have a 2-blob, we do not need to consider parallel edges for such bloblets.}

For convenience, we assign a standard partition of the leaves to the networks in $\BB'_{2}$. As  shown in \Cref{fig:level2network}, the leaf set $X$ of such a network can be partitioned as $\PP = \{A_1,  A_2 ,B_1,B_2,C_1,C_2\}$,  uniquely up to the interchange of A with B and 1 with 2. We call $\PP$ a \emph{canonical partition} of $X$. Here, we write $A = A_1 \cup A_2$, $B = B_1 \cup B_2$ and $C = C_1 \cup C_2$, where $C$ contains the two hybrids of $N$. 

We say that a set of leaves $S$ is \emph{on cycle~1} (resp.~\emph{on cycle~2}) of $N$ if $S \subseteq A_1 \cup B_1 \cup C_1$ (resp.~$S \subseteq A_2 \cup B_2 \cup C_2$), and that the set $S$ is \emph{on the same cycle} if either of these holds. 

\begin{figure}[htb]
    \centering
    \includegraphics[width=0.35\textwidth]{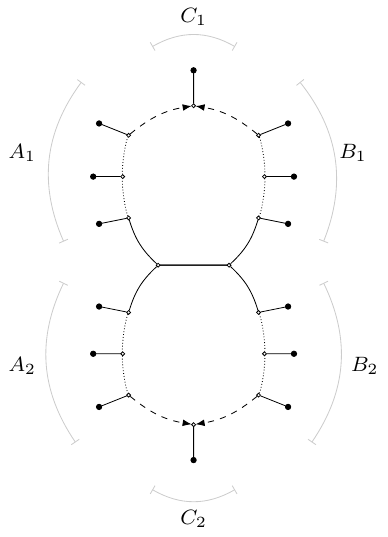}
    \caption{A general strictly level-2 network from the class $\BB'_2$ with a canonical partition $\PP = \{A_1,  A_2 ,B_1,B_2,C_1,C_2\}$ of its (unlabeled) leaves.}
    \label{fig:level2network}
\end{figure}

The following lemma presents an exact expression for the error term $\varepsilon$ of \Cref{lem:NANUQ_error_general} for a network in~$\BB'_2$. In stating it, indices $i$ are taken modulo 2, so, for example, when $i=2$ the set $A_{i+1}$ is $A_1$.

\begin{lemma}\label{lem:NANUQ_exact_error}
    Let $N$ be a semi-directed network on leaf set $X$ from the class $\BB'_{2}$ with $ |X| \geq 4$ and let $\PP$ be a canonical partition of $X$. Let $\{x,y\}$ be any pair of leaves from $X$. Then, 
    $$d_N(x,y) = \frac{1}{4} \sum_{T \in \T(N)} d_T (x,y)\cdot \mu (T, N) + \varepsilon (x,y),$$
    where $\varepsilon (x,y) = \varepsilon (y,x)$ and 
    \begin{align}
         \varepsilon(x,y)  = \frac{1}{2} \cdot 
        \begin{cases}
        - |A_1||A_2| - |B_1||B_2| &\text{if } x \in C_i \text{ and } y \in C_{i+1},\\
        |A_{i+1}| + |B_{i+1}| &\text{if } x \in C_i \text{ and } y \in A_i \cup B_i,\\
        - |B_{i}| &\text{if } x \in C_i \text{ and } y \in A_{i+1},\\
        - |A_{i}| &\text{if } x \in C_i \text{ and } y \in B_{i+1},\\
        - 1 &\text{if } x \in A_i \text{ and } y \in A_{i+1}, \text{or  } x \in B_i \text{ and } y \in B_{i+1}, \\
        0 &\text{otherwise}.\\
        \end{cases}\label{eq:epsform}
    \end{align}
\end{lemma}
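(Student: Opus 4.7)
The plan is to apply \cref{lem:NANUQ_error_general}, which yields
$\varepsilon(x,y) = 2 \sum_{z,w \neq x,y}\bigl(\rho_{xy}(N|_{xyzw}) - \rhotilde_{xy}(N|_{xyzw})\bigr)$. Because $N \in \BB'_2$ has exactly two hybrid nodes, $\mu(N) = 2^2 = 4$, justifying the factor $\frac{1}{4}$ in the main expression. The symmetry $\varepsilon(x,y)=\varepsilon(y,x)$ is immediate from the symmetry of $\rho_{xy}$ and $\rhotilde_{xy}$ in their two subscripts (see \cref{eq:rho,eq:rhotilde}), so it suffices to verify \cref{eq:epsform} in each of the six cases, working up to the swap $x\leftrightarrow y$ and the cyclic index swap $i\leftrightarrow i+1$.

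The central reduction is that, for $N \in \BB'_2$, the undirected shape of the induced quarnet $N|_{xyzw}$---after contracting 2-blobs, 3-blobs, and any 3-cycles within a blob---is completely determined by which blocks of the canonical partition $\PP$ contain $x,y,z,w$, together with the cycle-1/cycle-2 distinction. This follows because outer-labeled planarity, galledness, and the bloblet structure of $N$ restrict which portions of the two cycles of $N$ are preserved when only four leaves are retained. Consequently every $N|_{xyzw}$ falls into one of the five shapes of \cref{fig:rhodistance_tilde}, and the difference $\rho_{xy}(N|_{xyzw}) - \rhotilde_{xy}(N|_{xyzw})$ can be read directly off that figure; in particular it vanishes whenever the induced quarnet is a tree.

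With this reduction in hand, I would prove each case of \cref{eq:epsform} separately. Fix $i=1$ without loss of generality, and, for each prescribed pair of blocks containing $(x,y)$, subcase on the blocks containing $(z,w)$. For each subcase, count---in terms of $|A_1|, |A_2|, |B_1|, |B_2|, |C_1|, |C_2|$---how many pairs $\{z,w\}$ produce each of the five quarnet shapes, multiply by the corresponding $\rho_{xy} - \rhotilde_{xy}$ value, and sum, multiplying by the overall factor of $2$ from \cref{lem:NANUQ_error_general}. The ``otherwise'' case requires showing that every contributing $\{z,w\}$ produces an induced quarnet with $\rho_{xy} = \rhotilde_{xy}$, yielding $\varepsilon(x,y)=0$.

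The principal obstacle is the combinatorial bookkeeping: a priori there are on the order of $\binom{6}{2}+6$ block-configurations for $\{x,y\}$ and, within each, roughly as many for $\{z,w\}$, with the counts depending sensitively on whether $z$ or $w$ is itself a hybrid taxon in $C_1 \cup C_2$. I expect the cleanest route is to first catalogue, once and for all, how the five shapes of \cref{fig:rhodistance_tilde} arise from the membership of the four taxa in the blocks of $\PP$, and only then perform the counting. Particular care is required when $\{z,w\}$ includes one or both hybrid taxa, as these are precisely the configurations yielding nontrivial contributions to $\varepsilon(x,y)$.
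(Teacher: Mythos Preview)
Your approach is essentially the same as the paper's: invoke \cref{lem:NANUQ_error_general}, read $\rho_{xy}-\rhotilde_{xy}$ off \cref{fig:rhodistance_tilde}, and count contributing pairs $\{z,w\}$ case by case. The paper streamlines the counting by observing up front that for $N\in\BB'_2$ the induced quarnet (after contracting 3-cycles) lands in one of the three leftmost columns of \cref{fig:rhodistance_tilde}, and that $\rho_{xy}-\rhotilde_{xy}\neq 0$ only when the quarnet is strictly level-2 with $x,y$ adjacent in its circular order; this translates into three simple conditions---(i) both hybrid leaves lie in $\{x,y,z,w\}$, (ii) two of the four leaves lie on each cycle, (iii) $x,y$ are adjacent---and the sign is $+\tfrac14$ or $-\tfrac14$ according to whether $x,y$ are on the same cycle. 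With this reduction the six cases each become a one-line count, avoiding the full block-by-block catalogue you anticipate. Your final remark is close to this but slightly off: the nontrivial contributions require \emph{both} hybrid taxa to lie in the quarnet's leaf set $\{x,y,z,w\}$, not merely in $\{z,w\}$.
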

\begin{proof}
Since $\mu (N) = 2^2 = 4$, by \Cref{lem:NANUQ_error_general} all that remains is to compute the error term $\varepsilon$, where
$$\varepsilon (x,y) = 2 \sum_{z,w \neq x,y} 
    (\rho_{xy} (N|_{xyzw}) - \rhotilde_{xy} (N|_{xyzw}) 
    ).$$ Next, note that for any $N|_{xyzw}$ its underlying undirected network~$\bar{N} = \overline{N|_{xyzw}}$ with 3-cycles contracted appears --- up to relabeling the leaves --- in one of the three leftmost columns of the top row of \Cref{fig:rhodistance_tilde}. Specifically, if $N|_{xyzw}$ contains no 4-blob, $\bar{N}$ will be the quartet tree in the far-left column, if $N|_{xyzw}$ has a strict level-1 4-blob or a strict level-2 4-blob with a 3-cycle, $\bar{N}$ appears in the second column; otherwise, it appears in the third column. \Cref{fig:rhodistance_tilde} also gives the value of $\rho_{xy} (N|_{xyzw}) - \rhotilde_{xy} (N|_{xyzw})$, with this expression being non-zero only when $N|_{xyzw}$ is strictly level-2 after contracting 3-cycles and $x,y$ are adjacent in the induced circular order. In particular, if these conditions hold, then
    $$\rho_{xy} (N|_{xyzw}) - \rhotilde_{xy} (N|_{xyzw}) = \begin{cases} 1/4 &\text{if $x,y$ are on the same cycle,}\\-1/4 &\text{if $x,y$ are on different cycles.}\end{cases}$$
        
    Computing $\varepsilon (x,y)$ is now a matter of checking every quarnet of $N$ on leaves $x$ and $y$ to see if it falls in one of the above two classes contributing a non-zero term. To simplify this, note that the sign of $\varepsilon$ is completely determined by whether $x$ and $y$ are on the same cycle or not. Next, note that a quarnet on $x$ and $y$ can only be strictly level-2 after contracting 3-cycles, if (i) the quarnet has the two hybrids in its leaf set (otherwise it would be at most level-1); and (ii) it has two leaves on each cycle (otherwise one of the cycles would be a 3-cycle which would be contracted). Furthermore, recall from above that (iii) the term is non-zero only when $x$ and $y$ are adjacent in the quarnet order.
    
    Thus, to determine $\varepsilon (x,y)$ it suffices to first count the number of choices of the other two leaves $z$ and $w$ such that the induced quarnet satisfies the conditions (i)-(iii). Multiplying this count by $\frac{1}{4} \cdot 2 = \frac{1}{2}$, and then determining the sign as above yields the value.

    \emph{Case 1: $x \in C_i$ and $y \in C_{i+1}$}. Condition (i) is already satisfied. To adhere to condition (iii), we can either pick both leaves $z,w$ in $A$ or both in $B$. To adhere to condition (ii), $z,w$ must be on different cycles. In total, this gives $|A_1||A_2| + |B_1||B_2|$ choices of $z$ and $w$.

    \emph{Case 2:  $x \in C_i$ and $y \in A_i \cup B_i$.} By condition (i), we must pick the other hybrid in $C_{i+1}$. Then, to make $x$ and $y$ adjacent (condition (iii)) and to force them to have two leaves on either cycle (condition (ii)), there are exactly $|A_{i+1}| + |B_{i+1}|$ choices.

    \emph{Case 3: $x \in C_i$ and $y \in A_{i+1}$.} Again, we need to pick the other hybrid in $C_{i+1}$ for condition (i). Then, there are $|B_i|$ choices for the fourth leaf to satisfy the other two conditions. 

    \emph{Case 4: $x \in C_i$ and $y \in B_{i+1}$.} Analogous to Case 3.

    \emph{Case 5: $x\in A_i$ and $y \in A_{i+1}$ or $x\in B_i$ and $y \in B_{i+1}$.} In this case, we need to pick the two hybrids in $C$ to satisfy all conditions. Hence, there is only one choice.  
    
    \emph{Case 6: otherwise}. In this case, we have that for some $i \in \{1,2\}$ (1): $\{x, y\} \subseteq A_i$, (2): $\{x,y\} \subseteq B_i$, (3): $x \in A_i$ and $y \in B_i$ (or vice versa), or (4): $x \in A_i$ and $y \in B_{i+1}$ (or vice versa). In all cases, we need to pick the two hybrids. But then, either condition (ii) or condition (iii) is not satisfied. 
    Thus, there are no valid choices.
\end{proof}

Having expressed the NANUQ metric for networks in $\BB_2'$ as a sum of the circular decomposable quartet metric and an explicit error term, we can now show that the NANUQ metric is itself circular decomposable. As a first step we compute bounds on the split weights $\alpha_{ij}$ (see \cref{prop:split_weights}).

Given a bloblet  $N$ in $\BB'_2$ with leaf set $X$ and canonical partition $\PP$, each of its displayed trees can be obtained by removing one hybrid edge for each of the two hybrids. Allowing for multiplicities, this gives rise to four displayed trees (see \cref{fig:trees_displayed}). We denote these trees by $T_{\Sigma_1 \Sigma_2}$ with $\Sigma_i \in \{A_i, B_i\}$, 
where, for example, $\Sigma_i = A_i$ means that the hybrid edge incident to $C_i$ and joined to the subgraph on $A_i$ is kept intact, but its partner hybrid edge (attached to the $B_i$ subgraph) is removed.
Hence, when taking multiplicities into account, every network $N$ then has the displayed trees $T_{A_1 A_2}$, $T_{A_1 B_2}$, $T_{B_1 A_2}$ and $T_{B_1 B_2}$ (see again \Cref{fig:trees_displayed}). Here, multiplicities arise if some of the sets $A_i,B_i$ are empty. For example, the trees $T_{A_1 A_2}$ and $T_{B_1 A_2}$ are isomorphic if $A_1$ and $B_1$ are empty.

\begin{figure}[htb]
    \centering
    \includegraphics[width=\textwidth]{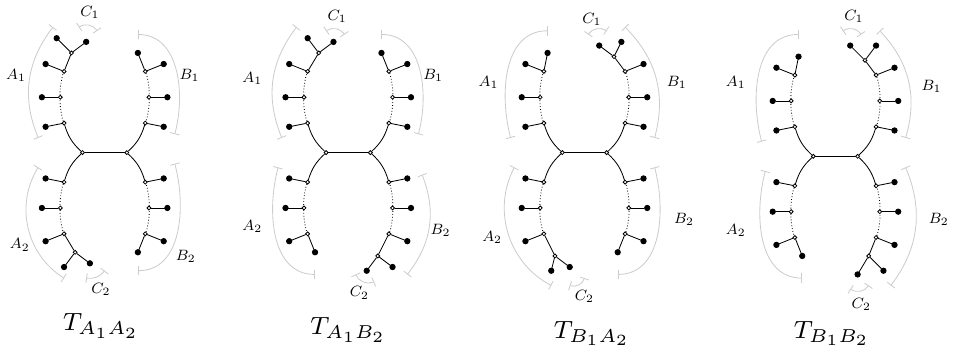}
    \caption{The four trees displayed by a general semi-directed network from the class $\BB_2'$, using the labeling from \Cref{fig:level2network}. If some of the taxon sets $A_i,B_i$ are empty, some of these trees may be isomorphic and have a different topology.}
    \label{fig:trees_displayed}
\end{figure}

We call a non-trivial split of a displayed tree $T$ a \emph{major split} if it separates the leaves in $A$ from those in $B$. Hence, a major split is of the form $A|B \cup C$, $A \cup C | B$, $A \cup C_1 | B \cup C_2$, or $A \cup C_2 | B \cup C_1$. As an example, if $|A|, |B| \geq 1$, then $A \cup C_1 | B \cup C_2$ is a major split of $T_{A_1 B_2}$.

\begin{lemma}\label{lem:splitweights_displayed_trees}
Let $N$ be a network on leaf set $X$ from the class $\BB'_{2}$ with $n=|X| \geq 4$, $\PP$ a canonical partition of~$X$, and $\CC = (x_0, \dots, x_n = x_0)$ the circular order induced by~$N$. Let $T = T_{\Sigma_1 \Sigma_2}$ be a displayed tree of $N$, $d_T$ the quartet metric on $T$, and $S_{ij} \in \SSS (\CC)$ a split. Then, 
\begin{enumerate}[label={(\alph*)},noitemsep]
\item $\alpha_{ij} (d_T) = 0$, if $S_{ij} \not \in \Split (T)$;
\item $\alpha_{ij} (d_T) \geq 2|X| - 4 > 0$, if $S_{ij} \in \Split (T)$;
\item $\alpha_{ij} (d_T) > 2 |A_1||A_2| + 2|B_1||B_2|$, if $S_{ij} \in \Split (T)$ and $S_{ij}$ is a major split of $T$.
\end{enumerate}
\end{lemma}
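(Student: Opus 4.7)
The plan is to reduce all three parts to an analysis of the quartet-metrization edge weights of~$T$. By \cref{cor:quartet_decomposable}, $d_T$ is circular decomposable with support $\Split(T)$, and by \cref{prop:splits_are_circular}, every split in $\Split(T) \subseteq \Split(\T(N))$ is of the form $S_{ij}$ for some $i,j$. \Cref{thm:quartet-metric} then gives the explicit split decomposition $d_T = \sum_{S \in \Split(T)} w(e_S) \cdot \delta_S$, where $e_S$ is the edge of $T$ inducing $S$ and $w$ is the quartet metrization. Combining this with the decomposition from \cref{prop:split_weights} and the uniqueness of the circular split decomposition \cite{bandelt1992canonical}, I obtain $\alpha_{ij}(d_T) = 2\,w(e_{S_{ij}})$ whenever $S_{ij} \in \Split(T)$, and $\alpha_{ij}(d_T) = 0$ otherwise.

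Part (a) is then immediate, and part (b) follows from \cref{prop:edge_inequality_weight}, which yields $\alpha_{ij}(d_T) \geq 2(|X|-2) = 2|X|-4 > 0$ since $|X| \geq 4$. For part (c), I would locate the edge $e$ of $T = T_{\Sigma_1 \Sigma_2}$ inducing the major split (using the structure of the displayed trees from \cref{fig:trees_displayed}) and read off the 4-partition of $X$ it determines. In the generic case where all of $|A_1|, |A_2|, |B_1|, |B_2| \geq 1$, the 4-partition at the major edge of $T_{A_1 A_2}$ is $(A_1 \cup C_1,\, A_2 \cup C_2,\, B_1,\, B_2)$, giving, via \cref{eq:quartet_metrization1}, $w(e) - (|A_1||A_2| + |B_1||B_2|) = |A_1||C_2| + |C_1||A_2| + |C_1||C_2| \geq |C_1||C_2| \geq 1$; the case $T_{B_1 B_2}$ is symmetric. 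For $T_{A_1 B_2}$, the 4-partition is $(A_1 \cup C_1,\, A_2,\, B_1,\, B_2 \cup C_2)$, so $w(e) - (|A_1||A_2| + |B_1||B_2|) = |C_1||A_2| + |B_1||C_2| \geq |A_2| + |B_1| \geq 2$; the case $T_{B_1 A_2}$ is symmetric. After doubling, this gives the required strict inequality.

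The main obstacle will be handling the degenerate sub-cases in which one or more of $A_1, A_2, B_1, B_2$ is empty, because the tree topology (and hence the partition at the major edge) then collapses and in particular the major edge may become pendant, so that \cref{eq:quartet_metrization2} must be used instead. Fortunately, in every sub-case where $|A_1||A_2| + |B_1||B_2| = 0$, the bound from part~(b) already suffices, since $\alpha_{ij}(d_T) \geq 2|X|-4 \geq 4 > 0$. In the remaining sub-cases, a direct recomputation of $w(e)$ from the simplified tree, combined with $|C_1|,|C_2| \geq 1$, yields the required strict inequality; here one uses the fact that a $\BB'_2$-bloblet on $\geq 4$ leaves cannot have $|A| = |B| = 0$ (else a $2$-blob would appear, contradicting $N \in \BB_2$ for $|X| \geq 3$) to rule out the most extreme degeneracies.
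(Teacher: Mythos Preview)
Your proposal is correct and follows essentially the same approach as the paper: establish $\alpha_{ij}(d_T) = 2\,w(e_{S_{ij}})$ when $S_{ij} \in \Split(T)$ and $0$ otherwise, then read off (a)--(c) from the quartet-metrization weights. The only notable difference is in how you obtain the key identity $\alpha_{ij} = 2\,w(e)$: you invoke uniqueness of the circular split decomposition to match the Chepoi--Fichet expansion against the tree-metric expansion from \cref{thm:quartet-metric}, whereas the paper appeals to the 4-point condition for additive tree metrics (using that, because all splits of $T$ lie in $\SSS(\CC)$, the internal path of the quartet $x_{j+1}x_i \mid x_{i+1}x_j$ in $T$ reduces to the single edge $e$). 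Both arguments are valid; yours is arguably cleaner since it avoids having to justify why that internal path collapses to one edge. For part~(c), you give more detail than the paper in the generic case, while your treatment of the degenerate cases (recompute $w(e)$ when some $A_i, B_i$ are empty, noting that whenever $|A_1||A_2| + |B_1||B_2| = 0$ the bound from (b) already suffices) is at the same level of rigor as the paper, which simply asserts that the inequality $w(e) > |A_1||A_2| + |B_1||B_2|$ persists in all cases.
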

\begin{proof}
Note that $\CC$ is an induced order for $T$, so  $\Split (T) \subseteq \SSS (\CC)$. By \Cref{cor:quartet_decomposable}, $d_T$ is circular decomposable and its support is $\Split (T)$. Statement (a) then follows from \cref{prop:split_weights}.

To prove statements (b) and (c), let $S_{ij} \in \Split (T)$ be arbitrary and  $e$  the unique edge of $T$ inducing it. Again, by \cref{prop:split_weights} and \Cref{cor:quartet_decomposable}, we already have $$\alpha_{ij} (d_T) = d_T (x_i,x_j) + d_T (x_{i+1}, x_{j+1}) - d_T (x_i, x_{j+1}) - d_T (x_{i+1}, x_j)>0.$$  
First suppose that $i+1 \neq j$ and $j+1 \neq i$. Then, since $S_{ij}$ is the split induced by $e$, $ x_{j+1} x_i | x_{i+1} x_j $ is a quartet induced by $T$ whose internal edge is $e$.
By the 4-point condition for additive metrics on trees \cite{buneman1970-4pointcondition},
$$\alpha_{ij} (d_T) = d_T (x_i,x_j) + d_T (x_{i+1}, x_{j+1}) - d_T (x_i, x_{j+1}) - d_T (x_{i+1}, x_j)=2\cdot w(e),$$
where $w(e)$ is the length of $e$ under the quartet metrization.
If $i+1 = j$ or $j+1 = i$, we similarly obtain that $\alpha_{ij} = 2 \cdot w(e)$.
From \Cref{prop:edge_inequality_weight}, we  have that $w(e) \geq |X| - 2$ and (b) follows. 

Now suppose that $S_{ij}$ is a major split of $T$. Using \cref{eq:quartet_metrization1}, even if some of the sets $A_1, A_2, B_1, B_2$ are empty, we then always have that $w(e) > |A_1||A_2| + |B_1||B_2|$ and the bound in (c) follows.
\end{proof}

The next lemma shows that if $N$ is a network from the class $\BB_2'$ inducing the circular order $\CC$ and $S_{ij} \in \SSS (\CC)$, then, if $S_{ij}$ is not induced by any displayed tree of $N$, its split weight will be zero under $d_N$.

\begin{lemma}\label{lemma:NANUQ_weights_zero}
Let $N$ be a network on leaf set $X$ from the class $\BB'_2$ with $n=|X| \geq 4$,  
$\CC = (x_0, \dots x_n = x_0)$ the circular order induced by~$N$, and $S_{ij} \in \SSS (\CC)$. If $S_{ij} \not \in \Split (\T (N))$, then $\alpha_{ij} (d_N) = 0$.
\end{lemma}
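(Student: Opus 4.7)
The plan is to combine \cref{lem:NANUQ_exact_error} with linearity of the split-weight functional $d \mapsto \alpha_{ij}(d)$. First, substituting the decomposition from \cref{lem:NANUQ_exact_error} gives
$$\alpha_{ij}(d_N) = \frac{1}{4}\sum_{T \in \T(N)} \mu(T,N) \cdot \alpha_{ij}(d_T) + \alpha_{ij}(\varepsilon).$$
Since $S_{ij} \notin \Split(\T(N))$, we have $S_{ij} \notin \Split(T)$ for every displayed tree $T \in \T(N)$, and so \cref{lem:splitweights_displayed_trees}(a) gives $\alpha_{ij}(d_T) = 0$ for every such $T$. Hence the first sum vanishes and the task reduces to showing $\alpha_{ij}(\varepsilon) = 0$.

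Second, I would exploit the key structural property of the error term: by \cref{eq:epsform}, the value $\varepsilon(x,y)$ depends only on which parts of the canonical partition $\PP = \{A_1,A_2,B_1,B_2,C_1,C_2\}$ contain $x$ and $y$. Consequently,
$$\alpha_{ij}(\varepsilon) = \varepsilon(x_i,x_j) + \varepsilon(x_{i+1},x_{j+1}) - \varepsilon(x_i,x_{j+1}) - \varepsilon(x_{i+1},x_j)$$
is fully determined by the parts of $\PP$ assigned to the four adjacent leaves $x_i, x_{i+1}, x_j, x_{j+1}$ in $\CC$, together with the cardinalities of those parts. In the easy sub-case where $\{x_i, x_{i+1}\}$ and $\{x_j, x_{j+1}\}$ each lie within a single part, all four $\varepsilon$-terms coincide and $\alpha_{ij}(\varepsilon) = 0$ trivially.

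The main obstacle is the complementary sub-case, in which at least one of $(x_i,x_{i+1})$ or $(x_j,x_{j+1})$ straddles a part-boundary of $\PP$ in the circular order $\CC$. Here I would argue by contrapositive: I would enumerate the part-boundary cut configurations at the $i$- and $j$-positions for which substitution of \cref{eq:epsform} yields a nonzero value of $\alpha_{ij}(\varepsilon)$, and for each such configuration exhibit an edge of one of the displayed trees $T_{\Sigma_1\Sigma_2}$ in \cref{fig:trees_displayed} that induces exactly the split $S_{ij}$. That would place $S_{ij}$ in $\Split(\T(N))$, contradicting the hypothesis. The symmetries of $\PP$ under $A \leftrightarrow B$ and $1 \leftrightarrow 2$ reduce this enumeration to a short list of essentially distinct configurations, and in each case the arithmetic involving the part sizes $|A_i|, |B_i|, |C_i|$ collapses cleanly to match the weight of the exhibited tree edge.
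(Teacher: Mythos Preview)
Your proposal is correct and follows essentially the same route as the paper: decompose $d_N$ via \cref{lem:NANUQ_exact_error}, use linearity of $\alpha_{ij}$ together with \cref{lem:splitweights_displayed_trees}(a) to kill the tree terms, and then reduce to showing $\alpha_{ij}(\varepsilon)=0$ by a case analysis that ultimately either computes $\alpha_{ij}(\varepsilon)=0$ directly or exhibits $S_{ij}$ as a split of some $T_{\Sigma_1\Sigma_2}$. The paper organizes the case analysis by $|Y\cap C|$ for $Y=\{x_i,x_{i+1},x_j,x_{j+1}\}$ rather than by which part-boundaries of $\PP$ are straddled, but since $|C_1|=|C_2|=1$ these two organizations are nearly the same.

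One small correction: your final sentence, that the arithmetic ``collapses cleanly to match the weight of the exhibited tree edge,'' is misplaced here. No weights need to be matched in this lemma; once you have located $S_{ij}$ in $\Split(T_{\Sigma_1\Sigma_2})$ for some $\Sigma_1,\Sigma_2$, you already have the contradiction with $S_{ij}\notin\Split(\T(N))$. Bounding $\alpha_{ij}(\varepsilon)$ against tree-edge contributions is the mechanism of the companion \cref{lemma:NANUQ_weights_nonzero}, not of this one.
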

\begin{proof}
Recall that $\mu(N)=4$. For simplicity, we thus show the result for the distances $4 \cdot d_N$. Let $\PP$ be a canonical partition of $X$. Denote by $\TT(N) = \{T_{A_1 A_2}, T_{A_1 B_2}, T_{B_1 A_2}, T_{B_1 B_2} \}$ the multiset of trees displayed by $N$. Then, using the equation from \cref{lem:NANUQ_exact_error} and the fact that $\mu(T, N) = 1$ when summing over all trees~$T$ in the multiset $\TT(N)$, 
$$4 \cdot 
d_N(x,y) = \sum_{T \in \TT(N)} d_T (x,y) + 4\cdot \varepsilon (x,y),$$
with $\varepsilon (x,y)$ as in  \Cref{lem:NANUQ_exact_error}.  
Let $Y = \{ x_i, x_{i+1}, x_j, x_{j+1}\}$ and suppose $S_{ij} \not\in \Split (\T (N))$. Then $S_{ij}$ is a non-trivial split, and thus $x_{i+1} \neq x_j$ and $x_{j+1} \neq x_i$. Hence, $|Y| = 4$. 

As shorthand notation, let $d_{kl}=4 \cdot d_N (x_k, x_l)$ and $\varepsilon_{kl}=4 \cdot \varepsilon (x_k, x_l)$. 
Similarly, for a tree $T \in \TT (N)$, let $d'_{kl} (T)= d_{T} (x_k,x_l)$, and $d'_{kl} = \sum_{T \in \TT (N)} d'_{kl} (T)$. Then \Cref{lem:NANUQ_exact_error} states that $d_{kl} = d'_{kl} + \varepsilon_{kl}$. 
Lastly, we write $D_1 = d_{i,j} + d_{i+1, j+1}$, $D_2 = d_{i, j+1} + d_{i+1, j}$, 
so that $D_1$ and $D_2$ correspond to the positive and negative terms 
for $\alpha_{ij}$ from \cref{prop:split_weights}. Similarly, we write $D'_p$, $D'_p (T)$, and $E_p$ with $p \in \{1,2\}$ for 
analogs of this expression using $d'$, $d' (T)$, and $\varepsilon$, respectively. 

In this notation, $$\alpha_{ij} (4\cdot d_N) = D_1 - D_2 =(D_1'-D_2')+(E_1-E_2).$$
Since $S_{ij} \notin \Split(\T(N))$ and $\alpha_{ij} (d_{T}) = D'_1 (T) - D'_2 (T)$, \Cref{lem:splitweights_displayed_trees}(a) implies that $D_1'(T)-D_2'(T)=0$ for all $T \in \TT (N)$, and hence $D'_1 - D'_2 = 0$. It is therefore enough to show
$E_1=E_2$.
To do so, we consider three cases.

\textbf{Case  $|Y \cap C| = 0$:} Since all leaves in $Y$ are non-hybrids and 
$x_i,x_{i+1}$ are neighbors in $\CC$, $\{x_i, x_{i+1} \}$ is a subset of $A$ or of $B$. Similarly, $\{x_j, x_{j+1} \}$ is a subset of $A$ or of $B$. If $\{x_i, x_{i+1} \} \subseteq A$ and $\{x_j, x_{j+1} \} \subseteq B$ (or vice versa), then from \cref{eq:epsform} we obtain that $E_1 = E_2 = 0$. On the other hand, if $\{x_i, x_{i+1} \} \subseteq A$ and $\{x_j, x_{j+1} \} \subseteq A$ (or both are subsets of $B$), then by \cref{eq:epsform} we get that $E_1 = E_2 \in \{0, -2, -4 \}$. 

\textbf{Case $|Y \cap C| = 1$:}
Without loss of generality, assume $x_i = c_1 \in C_1$ and $x_{i+1} \in B$. 
Note that either $\{x_j, x_{j+1}\} \subseteq A$ or $\{x_j, x_{j+1}\} \subseteq B$, since $x_j,x_{j+1}$ are non-hybrids adjacent in~$\CC$. 
Assume first that $\{x_j,x_{j+1}\}$ is a subset of exactly one of $ A_1, A_2, B_1$, or $B_2$.  Then, since the error terms in \cref{eq:epsform} depend only on the specific set in which the leaves are contained, it follows that $\varepsilon_{i, j} = \varepsilon_{i, j+1}$ and $\varepsilon_{i+1, j+1} = \varepsilon_{i+1, j}$ and that $E_1 = E_2$.

Assume now that $x_j \in B_1$ and $x_{j+1} \in B_2$, and hence $x_{i+1}\in B_1$.  Then $S_{ij}$ is, for instance, a split of $T_{A_1 B_2}$, a contradiction. With the circular order $\mathcal C$, it is not possible that $x_j \in B_2$ and $x_{j+1} \in B_1$, since $x_{i+1}$ is not a hybrid. If $x_j \in A_2$ and $x_{j+1} \in A_1$, then the split $S_{ij}$ is a split of $T_{A_1 B_2}$ (also $T_{A_1 A_2}$), a contradiction. Lastly, since $x_{i+1}\in B$, it is not possible that $x_j \in A_1$ and $x_{j+1} \in A_2$. 

\textbf{Case $|Y \cap C| = 2$:} We will show that in this case the split $S_{ij}$ will always be induced by a displayed tree, and hence this case cannot occur. We may again assume $x_i \in C_1$. 

If $x_j \in C_2$, we may also assume that $x_{i+1} \in B$ and $x_{j+1} \in A$. Then, $S_{ij}$ is a split in the displayed tree $T_{A_1 B_2}$. Similarly, if $x_{j+1} \in C_2$, we may assume that $x_{i+1}, x_j \in B$. Then, $S_{ij}$ is a split in the displayed tree $T_{A_1 A_2}$. Finally, if $x_{i+1} \in C_2$, then either $A$ or $B$ is empty. Assuming the latter, then, $x_j, x_{j+1} \in A$ and $S_{ij}$ is a split in $T_{A_1 A_2}$.
\end{proof}

The next lemma can be seen as a counterpart to \Cref{lemma:NANUQ_weights_zero}. It says that if $N$ is a network from the class $\BB_2'$ inducing the circular order $\CC$ and $S_{ij} \in \SSS (\CC)$, then, if $S_{ij}$ is induced by some displayed tree of $N$, its split weight will be strictly positive under $d_N$.

\begin{lemma}\label{lemma:NANUQ_weights_nonzero}
Let $N$ be a network on leaf set $X$ from the class $\BB'_2$ with $n=|X| \geq 4$, let $\CC = (x_0, \dots, x_n = x_0)$ be the circular order induced by~$N$, and let $S_{ij} \in \SSS (\CC)$. If $S_{ij} \in \Split (\T (N))$, then $\alpha_{ij} (d_N) > 0$.
\end{lemma}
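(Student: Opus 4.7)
The plan is to exploit the linearity of the functional $\alpha_{ij}$ in its metric argument. Combining \cref{lem:NANUQ_exact_error} with this linearity yields the decomposition
\begin{equation*}
\alpha_{ij}(d_N) \;=\; \frac{1}{4}\sum_{T \in \T(N)} \mu(T, N)\, \alpha_{ij}(d_T) \;+\; \alpha_{ij}(\varepsilon),
\end{equation*}
where $\alpha_{ij}(\varepsilon) = \varepsilon(x_i, x_j) + \varepsilon(x_{i+1}, x_{j+1}) - \varepsilon(x_i, x_{j+1}) - \varepsilon(x_{i+1}, x_j)$ is the analogous four-term combination of the error function from \cref{eq:epsform}. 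Since $S_{ij} \in \Split(\T(N))$, some displayed tree $T^*$ has $S_{ij} \in \Split(T^*)$, and \cref{lem:splitweights_displayed_trees}(a)--(b) guarantee that the first sum is at least $\tfrac{1}{4}\mu(T^*, N)(2n-4) > 0$, while all other summands are non-negative. Proving the lemma thus amounts to showing that $\alpha_{ij}(\varepsilon)$ cannot wipe out this positive baseline.

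I would split the argument into two regimes, according to whether $S_{ij}$ is a \emph{major split} of some displayed tree. A preliminary observation driving the non-major regime is that if the unordered pair $\{c_1, c_2\}$ of the two hybrid leaves coincides with any of $\{x_i, x_j\}$, $\{x_{i+1}, x_{j+1}\}$, $\{x_i, x_{j+1}\}$, or $\{x_{i+1}, x_j\}$, then the circular interval $U_{ij}$ or $V_{ij}$ is forced to contain one of $A$ or $B$ entirely and exclude the other, making $S_{ij}$ a major split. Hence in the non-major case the large negative term $-\tfrac{1}{2}(|A_1||A_2|+|B_1||B_2|)$ from \cref{eq:epsform} never appears in $\alpha_{ij}(\varepsilon)$, and the four $\varepsilon$ values contributing to it are each bounded by a linear-in-$n$ quantity. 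Mirroring the case enumeration in \cref{lemma:NANUQ_weights_zero}, I would then go through the possible placements of $(x_i, x_{i+1}, x_j, x_{j+1})$ among the blocks $A_1, A_2, B_1, B_2, C_1, C_2$ and verify that $\alpha_{ij}(\varepsilon)$ is always dominated by the tree-metric baseline.

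In the major case, the boundary pairs $\{x_i, x_{i+1}\}$ and $\{x_j, x_{j+1}\}$ must straddle the $A$-$B$ interface of the split, which in the induced circular order sits exactly at the two hybrid leaves. A direct evaluation of the four terms of $\alpha_{ij}(\varepsilon)$ via \cref{eq:epsform} then gives, for example for the major split $A \cup C_1 \mid B \cup C_2$, the explicit value $\alpha_{ij}(\varepsilon) = -\tfrac{1}{2}[(|A_1|-1)|A_2| + (|B_2|-1)|B_1|]$, and analogous expressions for the other three major splits. In each case, $|\alpha_{ij}(\varepsilon)|$ is strictly less than $\tfrac{1}{2}(|A_1||A_2|+|B_1||B_2|)$, while \cref{lem:splitweights_displayed_trees}(c) forces the contributing displayed-tree term $\tfrac{1}{4}\mu(T^*, N)\alpha_{ij}(d_{T^*})$ to strictly exceed $\tfrac{1}{2}(|A_1||A_2|+|B_1||B_2|)$, yielding the strict positivity of $\alpha_{ij}(d_N)$.

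The main obstacle I anticipate is the combinatorial bookkeeping: in the major regime one must verify the sign and magnitude of $\alpha_{ij}(\varepsilon)$ for each of the four major-split types, and in the non-major regime one must enumerate all admissible structural placements of the four leaves together with the form of $S_{ij}$ they induce. A further subtlety arises when some of the sets $A_i, B_i$ are empty: such degenerations alter the induced circular order, change which displayed trees carry $S_{ij}$, and in some cases raise the multiplicities $\mu(T, N)$. Each such degeneration must be checked separately to confirm that the tree-metric contribution still strictly dominates $\alpha_{ij}(\varepsilon)$.
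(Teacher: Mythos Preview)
Your proposal is correct and follows essentially the same approach as the paper: decompose $\alpha_{ij}(d_N)$ via linearity into the displayed-tree contribution plus the $\varepsilon$-contribution from \cref{lem:NANUQ_exact_error}, then bound the latter case-by-case using \cref{lem:splitweights_displayed_trees}(a)--(c). The paper organizes the case analysis by $|Y\cap C|$ and $|Y|$ rather than by your major/non-major dichotomy, but these partitions carve up the same work---your ``major'' regime is precisely where the quadratic term $-\tfrac12(|A_1||A_2|+|B_1||B_2|)$ enters and \cref{lem:splitweights_displayed_trees}(c) is invoked, while your ``non-major'' regime and its enumeration of leaf placements reproduces the paper's remaining subcases (including the degenerate-$A_i,B_i$ and trivial-split situations you flag).
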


\begin{proof}
Let $\PP$ be a canonical partition of $X$, let $S_{ij} \in \Split (\T (N)) \subseteq \SSS (\CC)$ be arbitrary and let $Y = \{ x_i, x_{i+1}, x_j, x_{j+1}\}$. Note that we have $|Y| \geq 3$ (as $|Y| \leq 2$ does not correspond to a split $S_{ij}$). We again show the result for the distances $4 \cdot d_N$, using the same notation as in the proof of \Cref{lemma:NANUQ_weights_zero}.  We consider four main cases. 

\textbf{Case 1, $|Y \cap C| = 0$ and $|Y|= 4$:}  In this case, all leaves in $Y$ are non-hybrids and thus $|X| \geq 6$. Then, by \cref{eq:epsform}, $E_1 \geq -4$ and $E_2 \leq 0$. Hence, $E_1-E_2 \geq -4$. Since $S_{ij} \in \Split (\TT (N))$, there is a tree $T \in \TT (N)$ which induces $S_{ij}$. Then, relying on the definition of $\alpha_{ij}$ from \cref{prop:split_weights}, $\alpha_{ij} (d_T) = D'_1 (T) - D'_2(T) \geq 2|X| - 4$ by part (b) of \cref{lem:splitweights_displayed_trees}, while for all the other trees $T' \in \TT (N)$ by parts (a) and (b) of \cref{lem:splitweights_displayed_trees}, $\alpha_{ij} (d_{T'}) =  D'_1 (T') - D'_2(T') \geq 0$. Thus, $D'_1 - D'_2 \geq 2|X| - 4 \geq 8$. Therefore, $\alpha_{ij} (4 \cdot d_N) =  (D'_1 - D'_2) + (E_1 - E_2) > 0$. 

\textbf{Case 2, $|Y \cap C| = 1$ and $|Y|= 4$:} In this case, there is exactly one hybrid leaf in $Y$, and we may assume $x_i \in C_1$, $x_{i+1} \in B$ and that the circular order $\CC$ goes clockwise in \cref{fig:level2network}. Then $x_j, x_{j+1}$ are both in $B$ or both in $A$, which we consider as subcases.

\emph{Case 2.1,  $x_j, x_{j+1} \in B$:} If $x_{i+1} \in B_2$, then because the circular order $\CC$ goes clockwise in \cref{fig:level2network}, we must have $x_j, x_{j+1} \in B_2$. Then by \cref{eq:epsform}, $E_1 = E_2$, so $D_1 - D_2 = D'_1 - D'_2 > 0$ by \cref{lem:splitweights_displayed_trees}(b). If $x_{i+1}, x_j, x_{j+1}\in B_1$, or if $x_{i+1} \in B_1$ and $x_j, x_{j+1} \in B_2$, we similarly find
$E_1=E_2$ and $D_1 - D_2 > 0$.
Lastly, if $x_{i+1}, x_j \in B_1$ and $x_{j+1} \in B_2$, then $E_1 = 2|A_2| + 2|B_2| - 2$ and $E_2 = -2 |A_1|$, so $E_1 - E_2 = 2(|A_2| + |B_2| + |A_1| - 1) >0$. We again have that $D'_1 - D'_2 > 0$ by \cref{lem:splitweights_displayed_trees}(b), so $D_1 - D_2 > 0$.

\emph{Case 2.2, $x_j, x_{j+1} \in A$:} By \cref{eq:epsform}, $\varepsilon_{i+1, j} = \varepsilon_{i+1, j+1} = 0$,
so $E_1 = \varepsilon_{i,j}$ and $E_2 = \varepsilon_{i, j+1}$. Moreover, we cannot have $x_j, x_{j+1} \in A_2$, because then $S_{ij}$ would not be a split of a tree in $\TT (N)$. If $x_j, x_{j+1} \in A_1$, then $E_1 = E_2$. By \cref{lem:splitweights_displayed_trees}(b) we know that $D'_1 - D'_2 > 0$, so we get that $D_1 - D_2 > 0$. If instead $x_{j} \in A_2$ and $x_{j+1} \in A_1$,  then $E_1 = -2 |B_1|$ and $E_2 = 2|A_2| + 2|B_2|$. So, $E_1 - E_2 = - 2|A_2| -2 |B_1| - 2|B_2| \geq -2|X|+6$. 
Note that in this case $S_{ij}$ is a split in both $T_{A_1 A_2}$ and $T_{A_1 B_2}$. Hence, $D'_1 - D'_2 \geq 4 |X| - 8$ by \Cref{lem:splitweights_displayed_trees}(b). Thus, $D_1 - D_2 > 0$.

\textbf{Case 3, $|Y \cap C| = 2$ and $|Y| = 4$:} Without loss of generality, assume that $x_i \in C_1$ and that the circular order $\CC$ goes clockwise in \cref{fig:level2network}. We consider three cases depending on the location of the other hybrid in the circular order of the 4 taxa. 

\emph{Case 3.1, $x_{i+1} \in C_2$:} For $x_i$ and $x_{i+1}$ to be neighbors in $\CC$, we must have that $|B|= 0$. We now consider two subcases, noting that the case $x_j, x_{j+1} \in A_1$ follows from  Case~3.1a by symmetry.

\begin{itemize}[noitemsep,topsep=0pt]
    \item Case 3.1a, $x_j, x_{j+1} \in A_2$: In this case $\varepsilon_{i,j} = \varepsilon_{i, j+1}$ and $\varepsilon_{i+1, j} = \varepsilon_{i+1, j+1}$, so $E_1 = E_2$. Since $D'_1 - D'_2 > 0$ by \cref{lem:splitweights_displayed_trees}(b), $D_1 - D_2 > 0$.
    \item Case 3.1b, $x_j \in A_2$ and $x_{j+1} \in A_1$: Now, $E_1 = -2|B_1| - 2|B_2| = -2|B| = 0$ and $E_2 = 2|A_1| + 2|B_1| + 2|A_2| + 2|B_2| = 2|A|$ (since $|B| = 0$ in Case~3.1). Hence, $E_1 - E_2 = -2 |A| = -2 |X| + 4$. Since $S_{ij}$ is a split of $T_{A_1 A_2}$, $T_{A_1 B_2}$ and $T_{B_1 A_2}$, by \cref{lem:splitweights_displayed_trees}(b), $D'_1 - D'_2 \geq 6 |X| - 12$. Thus, $D_1 - D_2 = (D'_1 - D'_2) + (E_1 - E_2) \geq 4|X|- 8> 0$.
\end{itemize}
    
\emph{Case 3.2, $x_j \in C_2$:} We consider three subcases. Note that the case where $x_{i+1} \in B_2$ and $x_{j+1} \in A_2$ is analogous to Case~3.2c. 
Furthermore, we have that $E_1 = -2|A_1||A_2| - 2|B_1||B_2|$ in all three subcases.

\begin{itemize}[noitemsep,topsep=0pt]
    \item Case 3.2a, $x_{i+1} \in B_1$ and $x_{j+1} \in A_2$: Thus, $E_2 = -2|B_1| - 2|A_2|$. Next, note that $S_{ij}$ is a major split of $T_{A_1 B_2}$. Thus, by \cref{lem:splitweights_displayed_trees}(c), $D_1'-D_2' \geq D'_1 (T_{A_1 B_2}) - D'_2 (T_{A_1 B_2}) > 2|A_1||A_2| + 2|B_1||B_2|$. Since $E_1 - E_2 \geq E_1 \geq -2|A_1||A_2|  -2|B_1||B_2|$, we have $D_1 - D_2 =(D_1'-D_2') + (E_1-E_2) > 0$.
    \item Case 3.2b, $x_{i+1} \in B_2$ and $x_{j+1} \in A_1$: Since $x_i, x_{i+1}$ and $x_j, x_{j+1}$ are neighbors in $\CC$,  $|A_2| = |B_1| = 0$. 
    Thus $E_1=0$ and since
   $\varepsilon_{j, i+1} = 2|A_1| + 2|B_1| = 2|A_1|$ and $\varepsilon_{j+1, i} = 2|A_2| + 2|B_2| = 2|B_2|$, we find $E_2 = 2|A_1| + 2|B_2|$. Thus $E_1 - E_2 = -2|A_1| - 2|B_2| = -2|X| + 4$. Since $S_{ij}$ is a split of $T_{A_1, A_2}$ and $T_{A_1, B_2}$,  by \Cref{lem:splitweights_displayed_trees}(b), $D'_1 - D'_2 \geq 4 |X| - 8$ and it follows that $D_1 - D_2 > 0$. 
    \item Case 3.2c, $x_{i+1} \in B_1$ and $x_{j+1} \in A_1$: We must have $|A_2| = 0$.  Moreover, $\varepsilon_{j, i+1} = -2|A_2| = 0$ and $\varepsilon_{j+1, i} = 2|A_2| + 2|B_2| = 2|B_2|$, so $E_2 = 2|B_2|$. Using $|A_2| = 0$ again, $E_1 - E_2 = -2|B_1||B_2| - 2|B_2|$. Then, since $S_{ij}$ is a major split of $T_{A_1 A_2}$ and $T_{A_1 B_2}$, we have by \Cref{lem:splitweights_displayed_trees}(c) that $ D'_1 (T_{A_1 A_2}) - D'_2 (T_{A_1 A_2}) > 2|A_1||A_2| + 2|B_1||B_2| = 2|B_1||B_2|$, and similarly $D'_1 (T_{A_1 B_2}) - D'_2 (T_{A_1 B_2}) > 2|B_1||B_2|$. Thus, $D'_1 - D'_2 > 4|B_1||B_2|$ and so $D_1 - D_2 = (D'_1 - D'_2) + (E_1 - E_2) > 2 |B_1| |B_2| - 2 |B_2| \geq 0$. In particular, $D_1 - D_2 > 0$.
\end{itemize}

\emph{Case 3.3, $x_{j+1} \in C_2$:} We consider two subcases. Note that the case with $x_{i+1}, x_j \in B_1$ follows from Case~3.3b by symmetry.

\begin{itemize}[noitemsep,topsep=0pt]
    \item Case 3.3a, $x_{i+1} \in B_1$ and $x_j \in B_2$: Then, $E_1 = -2|A_1| - 2|A_2| = -2|A|$ and $E_2 = -2|A_1||A_2| - 2|B_1||B_2| - 2$. Hence, $E_1 - E_2 \geq E_1 = -2|A| \geq -2 (|X| - 4)$ using that $|B| ,|C| \geq 2$ in this case. Then, since $S_{ij}$ is a split in $T_{A_1 A_2}$, we have by \cref{lem:splitweights_displayed_trees}(b) that $D'_1 - D'_2 \geq 2|X| - 4$. Thus, $D_1 - D_2 > 0$. 
    \item Case 3.3b, $x_{i+1}, x_j \in B_2$: Since $x_{i}$ and $x_{i+1}$ are neighbors in $\CC$, we must have $|B_1| = 0$. We also have $E_1 = -2|A_1| + 2|B_1| + 2|A_1| = 0$ and $E_2 = -2|A_1||A_2| - 2|B_1||B_2| = -2|A_1||A_2|$. Thus, $E_1 - E_2 \geq 0$. Since $D'_1 - D'_2 > 0$ by \cref{lem:splitweights_displayed_trees}(b), it follows that $D_1 - D_2 > 0$. 
\end{itemize}

\textbf{Case 4, $|Y|= 3$:} We may assume $x_{i+1} = x_j$, so $E_1 = \varepsilon_{i,j} + \varepsilon_{j, j+1}$ and $E_2 = \varepsilon_{i, j+1}$. Moreover, since $S_{ij}$ is the trivial split $\{x_j\} \mid \left( X \setminus \{x_j\} \right)$ which appears in all four trees in $\TT (N)$, by \cref{lem:splitweights_displayed_trees}(b), $D'_1 - D'_2 \geq 8|X| - 16$. 
If $|Y \cap C| =0 $, the argument for Case~1 applies (with  $|X| \geq 5$) , so we may henceforth assume $|Y \cap C| \geq 1$. We are left with the following four subcases, up to symmetry.

\emph{Case 4.1, $x_i \in C$ and $x_j, x_{j+1} \not \in C$:} In this case, we have that $E_1 \geq -2 \max \{|A_1|, |A_2|, |B_1|, |B_2| \} -2 \geq -2 (|X| - 2) - 2 =  -2|X| + 2$, while $E_2 \leq 2 \max \{|A_1| + |B_1|, |A_2| + |B_2| \} \leq 2 |X| - 4$. Hence, $E_1 - E_2 \geq -4 |X| +6$. Thus, since  $D'_1 - D'_2 \geq 8|X| - 16$, we obtain $D_1 - D_2 > 0$.

\emph{Case 4.2, $x_j \in C$ and $x_i, x_{j+1} \not \in C$:} In this case we get that $E_1 \geq -4 \max \{|A_1|, |A_2|, |B_1|, |B_2| \} \geq -4 |X| +8$, while $E_2 \leq 0$. Hence, $E_1 - E_2 \geq -4 |X| +8$ and again $D_1 - D_2 > 0$ as in Case~4.1.

\emph{Case 4.3, $x_i, x_j \in C$ and $x_{j+1} \not \in C$:} We may assume that $x_i \in C_1$, $x_{j} = x_{i+1} \in C_2$. Since $x_i$ and $x_{i+1}$ are neighbors in $\CC$, we may also assume $|B| = 0$ and $x_{j+1} \in A_2$. Then, $E_1 = -2 |A_1||A_2| - 2 |B_1||B_2| + 2|A_1| + 2|B_1| = -2|A_1||A_2| + 2|A_1|$. Since $E_2 = -2|B_1| = 0$, we see $E_1 - E_2 = -2 |A_1| |A_2| + 2|A_1|$. 
Now consider the tree $T_{A_1 B_2}$ and let $e$ be the edge in this tree inducing the split $S_{ij} = C_2 | \left(X \setminus C_2\right)$. Then, under the quartet metrization~$w$, we have that $w(e) = (|A_1| + 1) |A_2|$ (see \cref{eq:quartet_metrization2}).
Hence, by a similar argument as in the proof of \cref{lem:splitweights_displayed_trees}(c), we obtain that $D'_1 - D'_2 \geq D'_1 (T_{A_1 B_2}) - D'_2 (T_{A_1 B_2}) >  2|A_1| |A_2|$. Therefore, $D_1 - D_2 = (D'_1 - D'_2) + (E_1 - E_2) > 0$.

\emph{Case 4.4: $x_i, x_{j+1} \in C$ and $x_j \not \in C$.} By symmetry, we may assume that $x_i \in C_1$, $x_j \in B_1$ and $x_{j+1} \in C_2$, so $|B_1| = 1$ and $|B_2| = 0$. Then, $E_1 = 2|A_2| + 2|B_2| -2|A_2| = 2|B_2| = 0$ and $E_2 = -2|A_1||A_2| - 2|B_1||B_2| = -2|A_1||A_2|$. Hence, $E_1 - E_2 \geq 0$, so $D_1 - D_2 > 0$.
\end{proof}

We now show that the NANUQ metric is circular decomposable with support exactly the set of splits induced by the displayed trees of a network. We conjecture that this theorem can be extended to the class~$\NN_2$ (for some more details, see the discussion in \cref{sec:discussion}).
\begin{theorem}\label{thm:NANUQ_circular}
  Let $N$ be a semi-directed network on at least four leaves from the class $\BB_2$. Then, the NANUQ metric $d_N$ is circular decomposable and its support is $\Split (\T (N))$.  
\end{theorem}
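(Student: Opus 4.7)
The plan is to apply Proposition~\ref{prop:split_weights} with $\SSS = \Split(\T(N))$. Since $N \in \BB_2$ is an outer-labeled planar bloblet with at least four leaves, it induces a unique circular order $\CC = (x_0, \ldots, x_n = x_0)$ of its leaf set $X$, and by Proposition~\ref{prop:splits_are_circular}, $\Split(\T(N)) \subseteq \SSS(\CC)$. It therefore suffices to show that for every $S_{ij} \in \SSS(\CC)$ the split weight $\alpha_{ij}(d_N)$ is non-negative, and that $\alpha_{ij}(d_N) > 0$ if and only if $S_{ij} \in \Split(\T(N))$.

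I would split the argument by the level of $N$. A binary phylogenetic tree on $n \ge 4$ leaves has at least two non-leaf blobs, so $\BB_0$ contains no network with $|X| \ge 4$ and the level-$0$ subcase is vacuous. If $N$ is strictly level-$1$, the NANUQ metric $d_N$ coincides with the metric introduced in \cite{allman2019nanuq} (as noted just after Equation~(\ref{eq:rho})), for which circular decomposability with support $\Split(\T(N))$ has already been established. If $N$ is strictly level-$2$, i.e.\ $N \in \BB'_2$, then Lemma~\ref{lemma:NANUQ_weights_zero} shows $\alpha_{ij}(d_N) = 0$ for every $S_{ij} \in \SSS(\CC) \setminus \Split(\T(N))$, while Lemma~\ref{lemma:NANUQ_weights_nonzero} shows $\alpha_{ij}(d_N) > 0$ for every $S_{ij} \in \Split(\T(N))$. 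Applying Proposition~\ref{prop:split_weights} in either subcase then yields that $d_N$ is circular decomposable with support exactly $\Split(\T(N))$.

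The genuinely hard part is the strictly level-$2$ case, but essentially all of its difficulty --- the careful bounding of the error term from Lemma~\ref{lem:NANUQ_error_general} against the positive split weights of the quartet metrics of the displayed trees --- has already been dispatched inside Lemmas~\ref{lemma:NANUQ_weights_zero} and \ref{lemma:NANUQ_weights_nonzero} through their exhaustive case analyses keyed to the canonical partition $\PP$. The proof of the theorem itself then simply assembles these ingredients, and handles the level-$0$ and level-$1$ cases by the short observations above.
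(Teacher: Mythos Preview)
Your proposal is correct and mirrors the paper's own proof essentially step for step: reduce to the strictly level-$2$ case by invoking the known level-$1$ result from \cite{allman2019nanuq}, use Proposition~\ref{prop:splits_are_circular} to see $\Split(\T(N))\subseteq\SSS(\CC)$, and then apply Proposition~\ref{prop:split_weights} together with Lemmas~\ref{lemma:NANUQ_weights_zero} and~\ref{lemma:NANUQ_weights_nonzero}. Your explicit observation that the level-$0$ subcase is vacuous for bloblets on $\ge 4$ leaves is a small but welcome clarification the paper leaves implicit.
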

\begin{proof}
We may assume that $N$ is in $\BB'_2$, i.e., is strictly  level-2, since the result is known for level-1 networks \cite{allman2019nanuq}. By \cref{prop:splits_are_circular},  $\Split (\T (N) )$ is a circular split system congruent with the circular order induced by~$N$. Then, by \cref{prop:split_weights,lemma:NANUQ_weights_zero,lemma:NANUQ_weights_nonzero}, $d_N$ is circular decomposable and has support $\Split (\T (N))$.
\end{proof}

\section{Identifiability of a canonical form}\label{sec:identifiability}
In this section, we prove identifiability results for a canonical form of networks in $\BB_2$ and $\NN_2$, which can be obtained by applying a series of operations introduced below. Since the canonical form may not itself be a semi-directed network, its main purpose is not to infer the quartet-related properties from previous sections directly --- although quartet information can still be recovered by reversing the operations until a valid semi-directed network is reached. Instead, the canonical form is primarily intended to determine whether a pair of networks is distinguishable. Moreover, it captures the essential features relevant for inference and reflects the kind of output one would expect from an algorithmic method based on our results (see also the discussion in \cref{sec:discussion}).

Recall from \cref{sec:preliminaries} that, given a semi-directed network $N$ on $X$, the semi-directed network obtained by \emph{contracting} its $2$-blobs is the network where every 2-blob in $N$ is replaced by a degree-2 node which is then suppressed (see \cref{fig:canonical_operations}(i)). Similarly, the semi-directed network obtained by \emph{contracting} its $3$-blobs is the network where every 3-blob in $N$ is replaced by a degree-3 node (see \cref{fig:canonical_operations}(ii)). By contracting a 3-cycle in a blob $B$, we mean that the 3-cycle is replaced by a single node (see \cref{fig:canonical_operations}(v)). By \emph{undirecting} a 4-cycle in a semi-directed network, we mean that every directed edge in the 4-cycle is replaced with an undirected edge (see \cref{fig:canonical_operations}(viii)). Since actions such as these result in graphs which are not necessarily semi-directed networks, we refer to these as `mixed graphs'.

\begin{figure}[htb]
    \centering
\includegraphics[width=\textwidth]{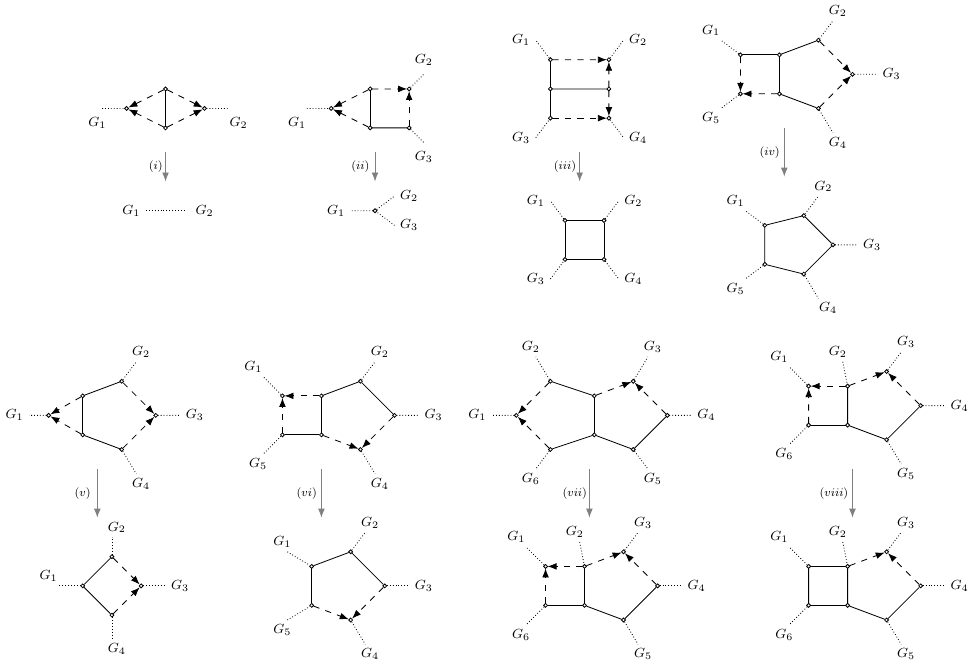}
    \caption{Illustration of operations $(i)$-$(viii)$ used in \cref{def:canonical_form}.}
    \label{fig:canonical_operations}
\end{figure}

If $N$ is outer-labeled planar, each $k$-blob $B$ of $N$ (with $k \geq 4$) induces a unique circular order of the subnetworks around the blob $B$ \cite{rhodes2025identifying}. The mixed graph obtained from $N$ by replacing $B$ by its \emph{representative cycle} is the graph where $B$ is replaced with an undirected $k$-cycle inducing the same circular order of the subnetworks around $B$ (see e.g. \cref{fig:canonical_operations}(iii) and (iv)).

\medskip

Leading up to \cref{def:canonical_form}, we restrict attention to $N$ in $\NN_2$, and $B$ in $\BB'_2$  a $k$-blob, $k\geq 5$, with hybrid nodes $u$ and $v$. If $B$ is a 5-blob that is isomorphic to the blob of \cref{fig:5blobs}(b), then we say $B$ is \emph{split symmetric}. This naming stems from the fact that the set of splits of the displayed trees in such a blob remains the same when cyclically permuting the leaves.

\begin{figure}[htb]
    \centering
\includegraphics[width=0.82\textwidth]{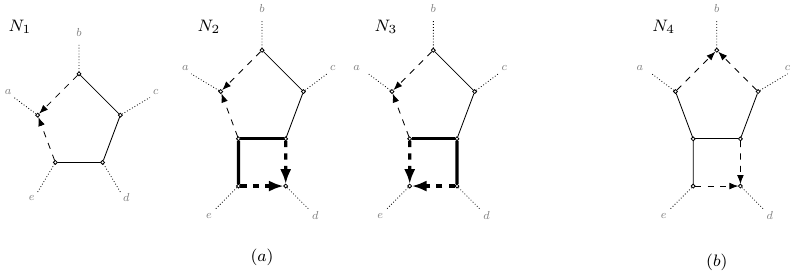}
    \caption{The four possible 3-cycle-free 5-leaf networks in $\BB_2$, up to relabeling the leaves. Each subfigure contains networks with the same canonical form. The networks $N_2$ and $N_3$ both have a dts-undetectable 4-cycle (in thick black), and their canonical form is the network $N_1$. The network $N_4$ consists of the split symmetric 5-blob, with its canonical form obtained by replacing the blob by a completely undirected 5-cycle (see also \cref{fig:canonical_operations}(iv)).}
    \label{fig:5blobs}
\end{figure}

A 4-cycle in $B$ is a \emph{dts-undetectable 4-cycle (displayed-tree-split undetectable 4-cycle)} if it contains a hybrid node $u$ and a parent $p_v$ of the other hybrid node $v$. See \cref{fig:5blobs}(a) for two dts-undetectable 4-cycles in 5-blobs, and \cref{fig:6blobs}(a) for two dts-undetectable 4-cycles in 6-blobs. 
Every dts-undetectable 4-cycle can be labeled as $(p_u, u, p'_u, w)$, where $p_u, p'_u$ are the two parents of~$u$ with $p_u, u$ articulation nodes and one of $\{p'_u, w\}$ a parent of the other hybrid node~$v$. By \emph{suppressing} a dts-undetectable 4-cycle $(p_u, u, p'_u, w)$, we mean deleting the hybrid edge $(p_u, u)$, undirecting the hybrid edge $(p'_u, u)$, and subsequently suppressing the two resulting degree-2 nodes. Suppressing a dts-undetectable 4-cycle turns $B$ into a level-1 blob with the same circular order and with the hybrid $v$ retained as the unique hybrid node in the now-cycle $B$ (see e.g. \cref{fig:canonical_operations}(vi)).

\begin{figure}[htb]
    \centering
\includegraphics[width=\textwidth]{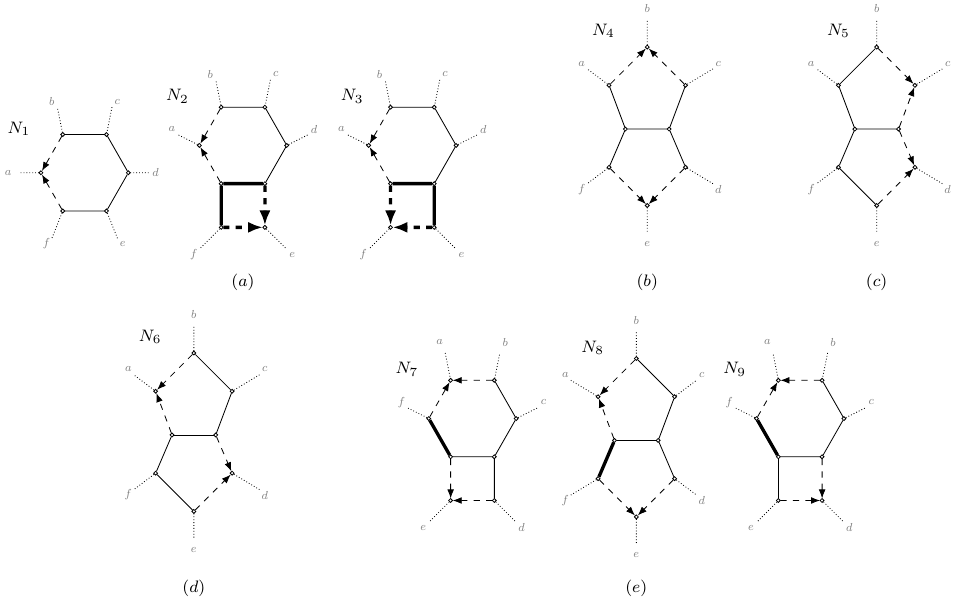}
    \caption{The nine possible 3-cycle-free 6-leaf networks in $\BB_2$, up to relabeling the leaves. Each subfigure contains networks with the same canonical form. The networks $N_2$ and $N_3$ both have a dts-undetectable 4-cycle (in thick black), with canonical form $N_1$. The networks $N_7$ and $N_8$ have a dts-undetectable edge of type~1 (in thick black), and network $N_9$ has a dts-undetectable edge of type~2 (in thick black). The canonical form for the networks in class (e) is non-binary (i.e, the degree of some non-leaf vertex is strictly greater than 3) and obtained by contracting the thick black edges and undirecting the hybrid edges in the remaining 4-cycle within the blob (see also \cref{fig:canonical_operations}(vii) and (viii)). }
    \label{fig:6blobs}
\end{figure}

A non-hybrid edge of $B$ incident to an articulation node is called a \emph{frontier edge}. The frontier edges of $B$ are exactly the non-hybrid edges on the `outside of the blob' in a planar representation, such as in \cref{fig:5blobs,fig:6blobs}. Lastly, suppose that $B$ is not a split symmetric 5-blob and does not have a dts-undetectable 4-cycle (note that then $k\geq 6$). A frontier edge $\{s, t \}$ of $B$ is a \emph{dts-undetectable edge (displayed-tree-split undetectable edge)} if (1) both $s$~and~$t$ are parents of a hybrid in $B$, or (2) $s$ is a parent of a hybrid in $B$ and $t$ is part of a 4-cycle in $B$ (or vice versa). See \cref{fig:6blobs}(e) for examples. Note that since $k\geq 6$, each blob can have at most one dts-undetectable edge.

\medskip

The operations and definitions illustrated in \cref{fig:5blobs,fig:canonical_operations,fig:6blobs},
now permit the following. 

\begin{definition}\label{def:canonical_form}
Let $N$ be a semi-directed network on $X$ from the class $\NN_2$. The \emph{canonical form} $\canonical{N}$ of $N$ is the mixed graph obtained from $N$ by applying the following operations in order:
\begin{enumerate}[label={(\roman*)},noitemsep, topsep=0pt]
    \item contracting every 2-blob;
    \item contracting every 3-blob; 
    \item replacing every 4-blob with its representative cycle;
    \item replacing every split symmetric 5-blob with its representative cycle;
    \item contracting 3-cycles in $k$-blobs ($k \geq 5$);
    \item suppressing dts-undetectable 4-cycles in $k$-blobs ($k \geq 5$);
    \item contracting dts-undetectable edges in $k$-blobs ($k \geq 6$);
    \item undirecting the remaining 4-cycles in $k$-blobs ($k\geq 6$)

\end{enumerate}
\end{definition}

Since operations $(i)$-$(viii)$ only alter the structure of individual blobs, the operations result in the same mixed graph independent of the order in which the blobs are considered. Hence, the canonical form of a network is unique.

The following characterizes those networks whose canonical form matches itself. Note that for 2-, 3- and 4-cycle-free bloblets in $\BB_2'$, the condition that a bloblet is in canonical form is, equivalently, that the two hybrids $x_{h_1}, x_{h_2}$ are not separated by exactly one leaf, $\dots, x_{h_1}, x_l, x_{h_2}, \dots$, in the induced circular order.

\begin{observation}\label{obs:canonical_is_network}
Let $N$ be a 2-, 3-, and 4-cycle-free semi-directed network from the class $\NN_2$ such that the edge distance between two hybrid nodes in any strictly level-2 blob is not 3. Then, $N \cong \canonical{N}$.
\end{observation}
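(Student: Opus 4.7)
The plan is to prove $\canonical{N} = N$ by showing that each of the eight operations $(i)$-$(viii)$ in \cref{def:canonical_form} acts trivially on $N$ under the hypotheses, so the canonical form coincides with $N$ itself. The operations naturally split into those eliminated directly by the cycle-freeness and edge-distance hypotheses, and those requiring a brief structural argument on small blobs.

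Four operations are ruled out immediately. Operation $(i)$ is vacuous because a 2-blob in $\NN_2$ is a pair of parallel edges (a 2-cycle), excluded by the 2-cycle-free hypothesis. Operation $(v)$ is ruled out by 3-cycle-freeness. Operation $(vi)$ is vacuous because dts-undetectable 4-cycles are, by definition, 4-cycles, excluded by 4-cycle-freeness, and operation $(viii)$ is likewise ruled out by 4-cycle-freeness.

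Operations $(ii)$ and $(iii)$ require showing that $N$ has no 3-blobs or 4-blobs. A 3-blob is either level-1 (hence a 3-cycle, excluded) or strictly level-2; but a strictly level-2 galled bloblet in $\BB'_2$ must have at least four leaves, since the canonical partition of \cref{fig:level2network} forces $|C_1|,|C_2|\geq 1$ together with at least one nonempty set on each of the $A$- and $B$-sides (otherwise some hybrid would have both parents on one side, creating either a parallel structure or collapsing the blob to level at most one). Hence $N$ has no strictly level-2 3-blob. For 4-blobs, the level-1 case is a 4-cycle (excluded), while the strictly level-2 4-blobs in $\BB'_2$ correspond to minimal configurations with $|C_1|=|C_2|=1$ and exactly one nonempty set among $\{A_1,A_2\}$ and one among $\{B_1,B_2\}$; a short case check on each such planar, galled configuration yields either a 3-cycle or a 4-cycle within the blob and is therefore excluded. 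For operation $(iv)$, the planar embedding of the split symmetric 5-blob in \cref{fig:5blobs}(b) shows that its two hybrids lie at edge distance exactly three in the blob, so the edge-distance hypothesis excludes it. Finally, operation $(vii)$ is ruled out by the definition of dts-undetectable edge: type~2 involves a 4-cycle and is excluded, while a type-1 edge $\{s,t\}$ must have $s, t$ as parents of two \emph{distinct} hybrids $h_1, h_2$ of the blob (if both were parents of a common hybrid $h$, then the edges $(s,h), (t,h), \{s,t\}$ would form a 3-cycle, contradicting 3-cycle-freeness), producing a length-3 path $h_1 - s - t - h_2$ inside the blob, which contradicts the edge-distance hypothesis. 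With every operation vacuous, $\canonical{N} = N$.

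The main obstacle will be completing the finite case analysis behind operations $(iii)$ and $(iv)$: verifying that every minimal strictly level-2 4-blob in $\BB'_2$ forces either a 3-cycle or a 4-cycle under outer-labeled planarity and galledness, and confirming from the planar embedding in \cref{fig:5blobs}(b) that the split symmetric 5-blob has hybrid edge distance exactly three. Everything else reduces directly to the three stated hypotheses.
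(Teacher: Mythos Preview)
The paper leaves this Observation unproved, so there is no argument to compare against directly; your operation-by-operation approach is the natural one. Two of your steps, however, have genuine gaps.

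For $(ii)$ and $(iii)$, the claim that a strictly level-2 bloblet in $\BB'_2$ needs at least four leaves because both the $A$- and $B$-sides of the canonical partition must be nonempty is false: one of $A$ or $B$ may well be empty without the blob collapsing to level $\le 1$. The clean argument is a vertex/edge/face count. A strictly level-2 $m$-blob in $\BB'_2$ has $m$ articulation nodes (degree $2$ in the blob) and exactly two degree-$3$ internal nodes, hence $m+2$ vertices, $m+3$ edges, and three planar faces of total boundary length $2(m+3)$. Under $3$- and $4$-cycle-freeness every face has length $\ge 5$, and a short check of the three possibilities for how many nodes lie on the outer face forces $m\ge 6$. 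This disposes of $(ii)$, $(iii)$, and in fact $(iv)$ simultaneously: the split symmetric $5$-blob necessarily contains a $4$-cycle and is already excluded by the $4$-cycle-free hypothesis. (Your edge-distance claim for $N_4$ happens to be true, but is not needed.)

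For $(vii)$, exhibiting the path $h_1-s-t-h_2$ shows only that the edge distance is at most $3$, not that it equals $3$; the hypothesis ``edge distance $\ne 3$'' is not contradicted by a length-$3$ path if a shorter one exists. You must also exclude distances $1$ and $2$. Distance $1$ is impossible because in a galled network each hybrid's unique out-edge is a cut edge, so two hybrids in the same blob cannot be adjacent. For distance $2$ one would need a common parent of $h_1$ and $h_2$; under $3,4$-cycle-freeness the blob is forced (by the same face count as above) to be a $(k{+}2)$-cycle with one chord, with $h_1,h_2$ on opposite arcs for galledness, and a common parent is then easily ruled out for $k\ge 6$. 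With these two corrections your plan goes through.
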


The importance of the canonical form is highlighted in the following.

\begin{lemma}\label{lem:canonical_iff_split}
Let $N_1$ and $N_2$ be two networks from $\NN_2$ on the same leaf set $X$. Then, $\canonical{N_1} \cong \canonical{N_2}$ if and only if $\Split (\T (N_1)) = \Split (\T (N_2))$.
\end{lemma}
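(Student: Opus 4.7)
The plan is to prove both directions blob-by-blob, using that the operations (i)--(viii) defining $\canonical{N}$ act independently on individual blobs and that $\Split(\T(N))$ decomposes as the union of the cut splits of $N$ together with the displayed splits contributed by each blob (after compressing each pendant subnetwork to a single representative leaf).

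For the forward direction, I would verify that each of the operations (i)--(viii) preserves $\Split(\T(\cdot))$. Operations (i)--(iv) are essentially immediate: a 2-blob contributes no non-trivial split, a 3-blob produces only a star-like resolution of its three pendants, a 4-blob and its representative 4-cycle share the same pair of displayed splits since they have matching circular pendant orders, and a split-symmetric 5-blob displays exactly the splits of the 5-cycle by the defining cyclic symmetry of its displayed-tree split set. For operations (v)--(viii) on strictly level-2 blobs of size at least 5, the same conclusion follows from a local enumeration: choosing one hybrid edge per hybrid node produces at most four displayed trees per blob, and one checks that contracting a 3-cycle, suppressing a dts-undetectable 4-cycle, contracting a dts-undetectable edge, or undirecting a remaining directed 4-cycle leaves the induced split set unchanged. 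The base cases $k=5,6$ are encoded in \cref{fig:5blobs,fig:6blobs}, and the general case follows by extending the same local analysis.

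For the reverse direction, I would reconstruct $\canonical{N}$ uniquely from $\Split(\T(N))$. The cut splits of $\canonical{N}$ coincide with those splits of $\Split(\T(N))$ appearing in every tree of $\T(N)$, so the tree-of-blobs together with the partition of leaves into pendant subnetworks of each blob is immediately recoverable. For each blob, restricting $\Split(\T(N))$ to one representative taxon per pendant yields a circular split system by \cref{prop:splits_are_circular}, whose unique congruent circular order (up to reversal) gives the cyclic arrangement of pendants in $\canonical{N}$. Finally, the canonical form of the blob is uniquely identified by the specific pattern of circular splits that appear: for at most $4$ pendants it is a representative cycle; for $k \in \{5,6\}$ pendants the canonical forms listed in \cref{fig:5blobs,fig:6blobs} are pairwise distinguished by their displayed-split patterns; and for $k \geq 7$ a general structural argument locates the two hybrids of a strictly level-2 canonical blob from this pattern.

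The main obstacle I expect is the uniqueness argument in the backward direction for blobs of size $\geq 7$: one must show that no two non-isomorphic canonical strictly level-2 bloblets produce the same displayed-split pattern. In concrete terms, the positions of both hybrids and the sizes of the segments of a canonical partition must be reconstructible from the displayed splits alone, which requires a careful combinatorial argument rather than a finite enumeration. The forward direction, by contrast, reduces to a finite checklist of local identities; once the reconstructibility of the blob-local canonical forms is established, gluing them along the common tree-of-blobs yields the desired isomorphism $\canonical{N_1} \cong \canonical{N_2}$.
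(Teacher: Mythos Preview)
Your forward direction matches the paper's closely: both verify that operations (i)--(viii) preserve $\Split(\T(\cdot))$, reducing to local checks on blobs. One small correction: the figures do not encode ``base cases'' for the forward direction in any inductive sense; the verification that each operation preserves displayed splits is a finite local identity that works uniformly in $k$, and the paper treats it that way.

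Your backward direction is genuinely different from the paper's. You propose to \emph{reconstruct} $\canonical{N}$ directly from $\Split(\T(N))$, while the paper instead shows the contrapositive by \emph{induction on $n$}: assuming $\canonical{N_1}\not\cong\canonical{N_2}$ with both in canonical form and inducing the same circular order, it finds a leaf $a$ whose removal keeps the canonical forms distinct (handling a single exceptional $7$-taxon configuration separately), then invokes the induction hypothesis on $X\setminus\{a\}$ with base cases $n=5,6$ read off from the splits graphs. This leaf-deletion induction is the key technique you are missing; it converts the hard ``reconstruct hybrids for all $k\geq 7$'' step into a reduction to finitely many small cases.

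Two concrete gaps in your outline. First, your characterization of cut splits as ``those splits of $\Split(\T(N))$ appearing in every tree of $\T(N)$'' is not available from the hypothesis: you are given only the \emph{set} $\Split(\T(N))$, not which tree each split comes from. The fix (which the paper uses) is to pass to displayed quartets and invoke the tree-of-blobs identifiability result of \cite{allman2023tree}. Second, and more seriously, the step you flag as the main obstacle---locating both hybrids from the displayed-split pattern for $k\geq 7$---is exactly the substance of the lemma, and ``a general structural argument'' is not yet a proof. Your reconstruction strategy is plausible and might be made to work, but as written it defers the difficult part; the paper's inductive leaf-removal provides a concrete mechanism to close it.
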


\begin{proof} 
First note that operations $(i)$-$(viii)$ from \cref{def:canonical_form} preserve the tree-of-blobs of a network. Since the displayed splits of a network uniquely determine its displayed quartets, and hence its tree-of-blobs \cite{allman2023tree}, it suffices to consider $N_1, N_2$ with the same tree-of-blobs. Hence, we can treat the blobs separately and may assume $N_1, N_2 \in \BB_2$.

The result is straightforward if $n = |X| \in \{2,3,4\}$, so we assume $n\geq 5$ and disregard operations $(i)$-$(iii)$ in \cref{def:canonical_form}. It can be checked directly that two networks that are the same after applying operation~$(iv)$, which applies to a specific 5-taxon network, have the same displayed splits. We next explain that operations $(v)$ and $(vi)$, which 
keep networks in $\BB_2$, have no effect on the set of splits of the trees displayed by the network. 

Operation $(v)$,
contracting a 3-cycle to a node preserves the set of displayed trees, since a 3-cycle displays exactly the same trees as a 3-leaf tree.

If a network has a dts-undetectable 4-cycle, it must have one of two structures. See Figures \ref{fig:5blobs}(a) and \ref{fig:6blobs}(a) for these in 5- and 6-blobs, and note that for $n\geq 7$ the cycle at the `top' simply becomes larger (i.e., has more leaves between $b$ and $c$). The suppression of these 4-cycles by operation $(vi)$ does remove some of the displayed trees, all of which are caterpillars,
so that, for instance, in $N_2$ of \Cref{fig:6blobs} the 
tree $(((f,e),a))...)$ is not displayed anymore.
This only removes the split $fe|a...$. As this split is on other displayed trees before and after operation $(vi)$, it has no impact on the collection of splits of the displayed trees.

Next, consider operations $(vii)$ and $(viii)$ of \cref{def:canonical_form}. We will show that if two networks become isomorphic after applying these operations, they have the same displayed splits. If only operation~$(vii)$ (resp. operation~$(viii)$) is applied, the pair of networks must be isomorphic to the pair of networks in \cref{fig:split_lemma_proof}(a) (resp. \cref{fig:split_lemma_proof}(b)). 
Lastly, if both operations~$(vii)$ and $(viii)$ are applied, the pair of networks comes from the three networks $N_7$, $N_8$, and $N_9$ in \cref{fig:6blobs}(e), with possibly more leaves between $b$ and $c$ if $n>6$. Similar to the previous paragraph, it is straightforward to check that the sets of displayed splits coincide in all three cases.

For the remainder of the proof, 
we may assume that $N_1=N_1^c$ and $N_2=N_2^c$ have the  canonical form produced by applying operations $(iv)$-$(viii)$, so they have no 3-cycles, no dts-undetectable edges, are not symmetric 5-blobs, may be non-binary only as produced by operation $(vii)$, and only have undirected 4-cycles not stemming from dts-undetectable 4-cycles.
That $N_1^c=N_2^c$ implies $\Split (\T ({N_1})) = \Split (\T ({N_2}))$ is now obvious. It remains to show the converse.

\smallskip

Suppose then that $N_1=\canonical{N_1} \not \cong \canonical{N_2}=N_2$. If $N_1,N_2$ induce different circular orders, then by \cite[Thm.\,5.3]{rhodes2025identifying}, the displayed quartets of $N_1$ and $N_2$ differ, and hence $\Split (\T (N_1)) \neq \Split (\T (N_2))$, as required. Thus we may assume $N_1$ and $N_2$ induce the same circular orders henceforth. 

\begin{figure}[htb]
    \centering
\includegraphics[width=.8\textwidth]{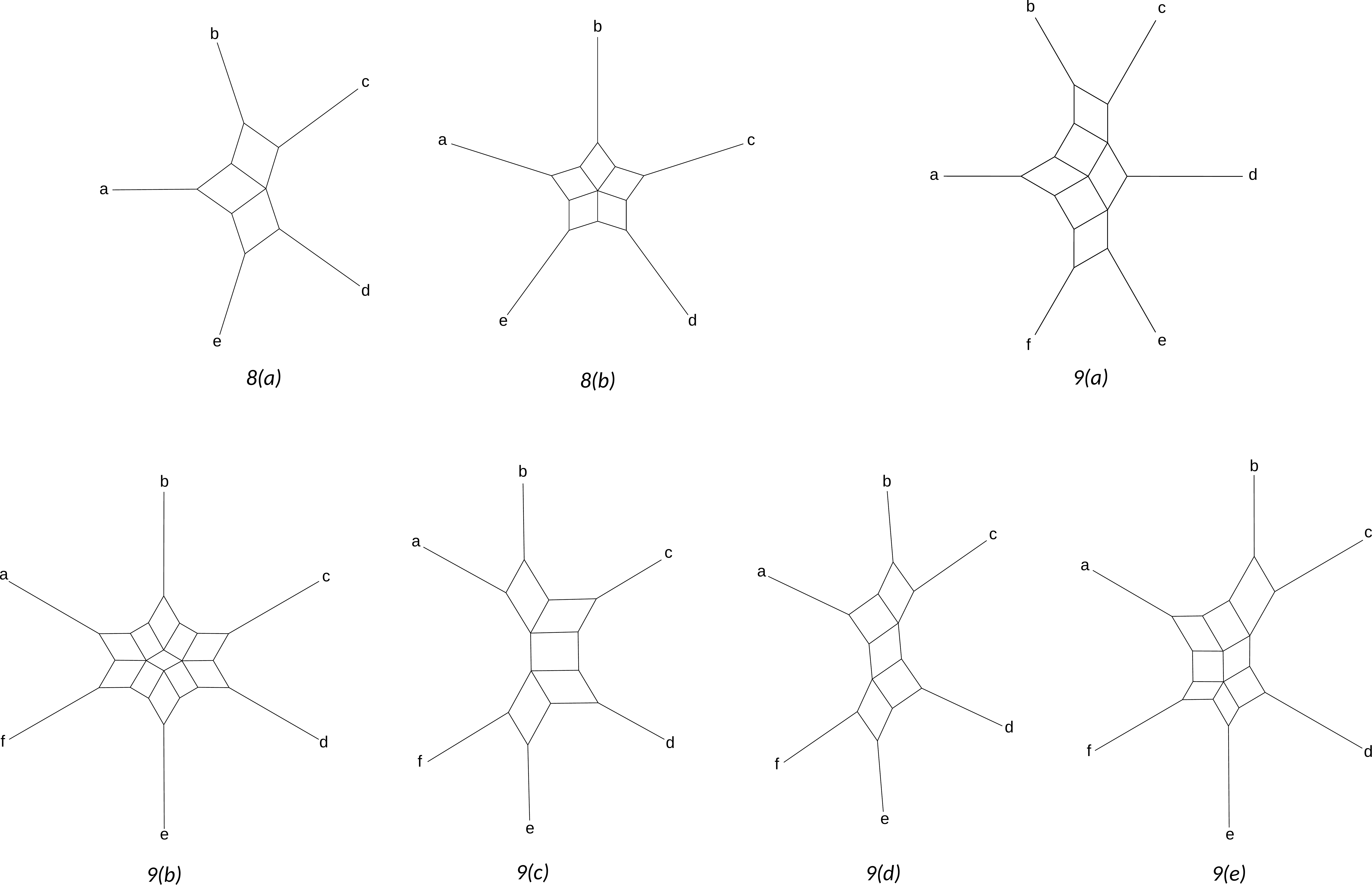}    
\caption{Splits graphs of the displayed splits of the seven sets of networks with different canonical forms in \cref{fig:5blobs,fig:6blobs}.}\label{fig:splitsgraph}
\end{figure}

We will prove that $\Split (\T (N_1)) \neq \Split (\T (N_2))$ by induction on $n$, with base cases $n= 5$ and $n=6$. 
Since $\Split (\T (N_1))$ and $\Split (\T (N_2))$ are both (unweighted) circular split systems, they can be depicted by a splits graph (see e.g. \cite{huson2010phylogenetic} for details). \cref{fig:splitsgraph} shows that the splits graphs of networks from \cref{fig:5blobs,fig:6blobs} (i.e., when $n=5$ or $n=6$) are distinct for different canonical forms. Hence, the base cases hold.

Now suppose $n\geq 7$ and that the theorem holds for smaller $n$. We will repeatedly use that if $\canonical{{{N_1}|_Y}} \not \cong \canonical{{{N_2}|_Y}}$ for some $Y \subset X$, then $\Split (\T ({N_1}|_Y)) \neq \Split (\T ({N_2}|_Y))$ by the induction hypothesis, and hence $\Split (\T ({N_1})) \neq \Split (\T ({N_2}))$. We denote this implication by $(*)$.

\begin{figure}[htb]
    \centering
\includegraphics[width=\textwidth]{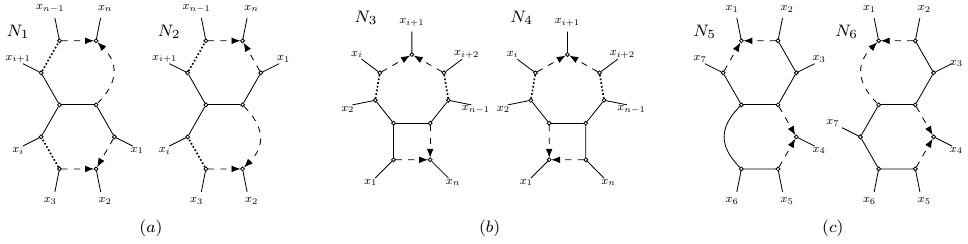}
    \caption{Three pairs of networks from the class $\BB_2$ used in the proof of \cref{lem:canonical_iff_split}. Dotted edges indicate locations where additional leaves may be attached. The networks $N_1$ and $N_2$ in subfigure (a) have $n\geq 7$ leaves and $4 \leq i \leq n-3$. The networks $N_3$ and $N_4$ in subfigure (b) have $n\geq 7$ leaves and $3 \leq i \leq n-4$.}
    \label{fig:split_lemma_proof}
\end{figure}

We first consider the case where one network, say $N_1$, is isomorphic to the network~$N_5$ of \cref{fig:split_lemma_proof}(c). Then, every displayed tree on the network $\canonical{{({N_1}|_{X \setminus \{x_1\}})}}$ (resp. $\canonical{{({N_1}|_{X \setminus \{x_4\}})}}$) induces the non-trivial split $x_2 x_3| x_4 x_5 x_6 x_7$ (resp. $x_5 x_6 | x_1 x_2 x_3 x_7$). 
If $\canonical{{({N_2}|_{X \setminus \{x_1\}}})}$ (resp. $\canonical{{({N_2}|_{X \setminus \{x_4\}})}}$) does not have this split on every displayed tree, we are done by $(*)$. This implies $x_1$ and $x_4$ are hybrids in both networks. Moreover, $x_1,x_2,x_3$ (resp. $x_4,x_5,x_6$) must be on the same cycle
to get these splits. Since $N_1 \not \cong N_2$ but the networks induce the same circular order, we must then have that $N_2$ is isomorphic to network~$N_6$ in \cref{fig:split_lemma_proof}(c).
But then $\canonical{{({N_1}|_Y)}} \not \cong \canonical{{({N_2}|_Y)}}$ for $Y = \{x_1, x_2, x_3, x_4, x_5, x_7\}$ and we are done by $(*)$.

We may now assume that neither $N_1$ nor $N_2$ is isomorphic to $N_5$ in \cref{fig:split_lemma_proof}(c). Then, we claim there exists a set of leaves $\{x,a,b,c\}$ such that (i) $N_1$ (and hence $N_2$) induces the circular order $(x, a, b, c, \ldots )$, (ii) $x$ is a hybrid leaf of $N_1$, (iii) $c$ is not a hybrid leaf of $N_1$, (iv) $x, a, b$ are incident to the same $k$-cycle~$S$ of $N_1$ with $k\geq 6$ (i.e.~at least 4 leaves are incident to this cycle). A set satisfying properties (i)-(iii) must exist since $n\geq 7$, and $N_1, N_2 \not \cong N_5$ ensures (iv) is also possible.

Properties (i)-(iv) ensure that by taking the subnetwork of $N_1$ induced by $X \setminus \{a\}$ (where we write $N'_1 = {N_1}|_{X \setminus \{a\}}$ and $N'_2 = {N_2}|_{X \setminus \{a\}} $), no 3-cycles, dts-undetectable edges, or new 4-cycles are created. Therefore, $\canonical{{(N'_1)}}$ can simply be obtained from $\canonical{N_1}$ by removing $a$ and suppressing the resulting degree-2 node. Equivalently, $\canonical{N_1}$ can be obtained from $\canonical{{(N'_1)}}$ by inserting the leaf $a$ between leaves $x$ and $b$. If $\canonical{{(N'_1)}} \cong \canonical{{(N'_2)}}$, then since $N_1$ and $N_2$ induce the same circular order, $\canonical{N_2}$ can also be obtained from $\canonical{{(N'_2)}}$ by inserting the leaf $a$ between leaves $x$ and $b$. This is a contradiction, since then $\canonical{N_1} \cong \canonical{N_2}$. Hence, we may assume $\canonical{{(N'_1)}} \not \cong \canonical{{(N'_2)}}$. It then follows from $(*)$ that $\Split (\T ({N_1})) \neq \Split (\T ({N_2}))$.
\end{proof}

In the following proposition, we use the notation $\QQ(N)=\QQ(\T(N))$ to denote the set of quartet trees induced by the displayed trees of a network $N$.
\begin{proposition}\label{prop:canonical_equivalences}
Let $N_1$ and $N_2$ be two networks from $\NN_2$ on the same leaf set $X$. Then, the following are equivalent:
\begin{enumerate}[label={(\roman*)},noitemsep, topsep=0pt]
\item $\canonical{N_1} \cong \canonical{N_2}$;
\item $\QQ( N_1) = \QQ (N_2)$;
\item $\Split( \T (N_1)) = \Split (\T (N_2))$.
\end{enumerate}
Furthermore, if $N_1, N_2 \in \BB_2 \subseteq \NN_2$ and $|X| \geq 4$, then $(i)$-$(iii)$ are equivalent to
\begin{enumerate}[noitemsep, topsep=0pt]
\item[(iv)] $d_{N_1} = d_{N_2}$.
\end{enumerate}
\end{proposition}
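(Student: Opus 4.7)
The plan is to establish the three-way equivalence of $(i)$, $(ii)$, and $(iii)$ for networks in $\NN_2$, then handle $(iv)$ separately in the bloblet case. The equivalence $(i) \iff (iii)$ is exactly \cref{lem:canonical_iff_split}, so nothing further is required for that pair.

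For $(ii) \iff (iii)$, the direction $(iii) \Rightarrow (ii)$ is immediate: a quartet $xy \mid zw$ lies in $\QQ(T)$ precisely when $T$ contains a split $A \mid B$ with $\{x,y\} \subseteq A$ and $\{z,w\} \subseteq B$, so $\QQ(N) = \bigcup_{T \in \T(N)} \QQ(T)$ is determined by $\Split(\T(N))$. For the converse, I would first invoke \cite[Thm.~5.3]{rhodes2025identifying} to conclude that $\QQ(N_1) = \QQ(N_2)$ forces $N_1$ and $N_2$ to induce the same circular order $\CC$, so by \cref{prop:splits_are_circular} both $\Split(\T(N_k))$ lie in $\SSS(\CC)$. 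The key elementary fact is: for any phylogenetic tree $T$ whose split system lies in $\SSS(\CC)$, a circular split $S_{ij}(\CC)$ belongs to $\Split(T)$ if and only if $T$ displays the quartet $x_{i+1} x_j \mid x_i x_{j+1}$. One direction is obvious from the definition of $S_{ij}$; for the converse, any circular split of $T$ separating the consecutive pair $(x_i, x_{i+1})$ must have $i$ as its first index, and separating $(x_j, x_{j+1})$ forces $j$ as its second. Consequently $\Split(\T(N_k))$ is determined as a subset of $\SSS(\CC)$ by $\QQ(N_k)$, and equality of quartets yields equality of splits.

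The equivalence $(iv) \iff (iii)$ in the bloblet case follows most cleanly by routing through $(ii)$. For $(ii) \Rightarrow (iv)$, observe that the NANUQ metric is defined via the quantities $\rho_{xy}(N|_{xyzw})$, each an unweighted average over the set $\T(N|_{xyzw})$ of quartet trees displayed by the quarnet $N|_{xyzw}$; this set is precisely the restriction of $\QQ(N)$ to $\{x,y,z,w\}$, so $\QQ(N_1) = \QQ(N_2)$ implies $d_{N_1} = d_{N_2}$ term by term. For $(iv) \Rightarrow (iii)$, I would apply \cref{thm:NANUQ_circular}, which states that each $d_{N_k}$ is circular decomposable with support $\Split(\T(N_k))$. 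By the uniqueness of the circular split decomposition of such a metric \cite{bandelt1992canonical}, the support is recoverable from the metric alone, so $d_{N_1} = d_{N_2}$ forces $\Split(\T(N_1)) = \Split(\T(N_2))$.

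The main obstacle I anticipate is stating and verifying the chord-to-quartet correspondence used in $(ii) \Rightarrow (iii)$: although intuitively obvious from the planar picture, it requires some careful bookkeeping of cyclic indices. Once this is in place, the remaining implications reduce cleanly to prior results or to the definitions.
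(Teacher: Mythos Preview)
Your argument is correct in outline, but it diverges from the paper's in the step $(ii)\Rightarrow(iii)$, and there is one point that needs tightening.

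\smallskip

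\textbf{Comparison with the paper.} The paper routes $(ii)\Rightarrow(iii)$ through the NANUQ distance: it first reduces to bloblets via the tree-of-blobs result of \cite{allman2023tree}, then for bloblets with $|X|\ge 4$ observes $\QQ(N_1)=\QQ(N_2)\Rightarrow d_{N_1}=d_{N_2}$, and finally invokes \cref{thm:NANUQ_circular} plus uniqueness of circular decomposition to recover $\Split(\T(N_k))$ as the support. In that route, $(iv)$ arises as an intermediate step rather than a separate case. Your approach instead gives a direct combinatorial argument via the chord-to-quartet correspondence $S_{ij}\in\Split(T)\iff x_{i+1}x_j\mid x_ix_{j+1}\in\QQ(T)$ (for non-trivial $S_{ij}$), which is more elementary and shows that the equivalence of $(i)$--$(iii)$ is independent of the NANUQ machinery. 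That is a genuine gain in conceptual clarity; on the other hand, the paper's route packages $(iv)$ into the same argument rather than treating it afterward.

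\smallskip

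\textbf{A point to tighten.} Your invocation of \cite[Thm.~5.3]{rhodes2025identifying} to conclude that $N_1$ and $N_2$ ``induce the same circular order $\CC$'' is loose for general networks in $\NN_2$. That theorem concerns the circular order of pendant subnetworks around a single blob; for a non-bloblet network there are in general several induced circular orders of $X$, so the right statement is that $N_1$ and $N_2$ share a common induced order. To justify this you need either to reduce to bloblets first (as the paper does, via \cite{allman2023tree}), or to argue that identical tree-of-blobs together with identical per-blob circular orders forces the sets of induced global orders to coincide. Either fix is easy, but as written the step is not quite supported by the cited reference. Your chord-to-quartet lemma itself is fine once a common $\CC$ is in hand (and trivial splits, where $|Y|<4$, are handled automatically since they lie in every $\Split(T)$).
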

\begin{proof}
The equivalence between $(i)$ and $(iii)$ was shown as \cref{lem:canonical_iff_split}. That $(iii)$ implies $(ii)$ follows easily, since one can obtain the set of displayed quartets from the set of splits of the displayed trees. To see that $(ii)$ implies $(iii)$, suppose $\QQ( N_1) = \QQ (N_2)$. Then, by \cite{allman2023tree}, $N_1$ and $N_2$ have the same trees-of-blobs $T$, and the splits corresponding to cut edges of $N_1$ and $N_2$ are equal. Thus, we can consider each blob of $N_1$ and $N_2$ separately, and it suffices to assume that $N_1, N_2 \in \BB_2$. If $|X| \leq 3$, $(iii)$ is trivially true. If instead $|X| \geq 4$, $d_{N_1} = d_{N_2}$, since the distances $d_{N_1}$ and $d_{N_2}$ are obtained from the quartets of $N_1$ and $N_2$, and we assumed those sets were equal. By \cref{thm:NANUQ_circular} and because the decomposition of a circular decomposable metric is unique, we obtain $(iii)$. 

The remaining implications follow immediately since if $N_1, N_2 \in \BB_2$ and $|X| \geq 4$, $(ii)$ implies $(iv)$ and $(iv)$ implies $(iii)$. 
\end{proof}

Note that in the previous proposition, $(i)$-$(iii) \Rightarrow (iv)$ is also true if $N_1, N_2 \in \NN_2$, but for the converse $N_1, N_2 \in \BB_2$ is necessary. In particular, we cannot use a similar argument as before, since we have not proved that $d_{N_1} = d_{N_2}$ implies $N_1$ and $N_2$ have the same tree-of-blobs. As mentioned before \cref{thm:NANUQ_circular}, we conjecture that this is true for networks in the class~$\NN_2$. Indeed, proving this extension of \cref{thm:NANUQ_circular} to the class~$\NN_2$ would suffice to obtain the equivalence $(i)$-$(iii) \Leftrightarrow (iv)$ in \cref{prop:canonical_equivalences} for networks in~$\NN_2$, and vice versa.

\smallskip

With a slight reformulation of \cref{prop:canonical_equivalences}, we obtain the main theorem.
\begin{theorem}\label{thm:identifiability}
\begin{enumerate}[label={(\alph*)},noitemsep, topsep=0pt]
\item Let $N_1$ and $N_2$ be two semi-directed, outer-labeled planar, level-2, galled networks. Then, $N_1$ and $N_2$ are distinguishable from their displayed quartets if and only if $N_1$ and $N_2$ have different canonical forms.
\item Let $N_1$ and $N_2$ be two semi-directed, outer-labeled planar, level-2, galled bloblet networks on at least four leaves. Then, $N_1$ and $N_2$ are distinguishable from their pairwise NANUQ distances if and only if $N_1$ and $N_2$ have different canonical forms. 
\end{enumerate}
\end{theorem}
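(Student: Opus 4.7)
The plan is to derive \cref{thm:identifiability} as an essentially immediate reformulation of \cref{prop:canonical_equivalences}, which already packages together all the combinatorial work. The only genuine content is in carefully unpacking what each notion of ``distinguishability'' means in terms of the concrete invariants that appear in that proposition.

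For part (a), I would read the phrase ``$N_1$ and $N_2$ are distinguishable from their displayed quartets'' as the statement that the sets of displayed quartet topologies differ, i.e., $\QQ(N_1) \neq \QQ(N_2)$. Taking the contrapositive of the equivalence $(i) \Leftrightarrow (ii)$ in \cref{prop:canonical_equivalences} yields exactly $\QQ(N_1) \neq \QQ(N_2) \iff \canonical{N_1} \not\cong \canonical{N_2}$, which is the claim of part (a). Note that \cref{prop:canonical_equivalences} applies to the full class $\NN_2$, so no extra hypothesis beyond what is already in the theorem is needed.

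For part (b), I would similarly interpret ``distinguishable from their pairwise NANUQ distances'' as $d_{N_1} \neq d_{N_2}$. The hypotheses $N_1, N_2 \in \BB_2$ and $|X| \geq 4$ are precisely those under which \cref{prop:canonical_equivalences} adds the fourth equivalent condition $(iv)$ $d_{N_1} = d_{N_2}$. Taking the contrapositive of $(i) \Leftrightarrow (iv)$ then gives part (b) directly. I would also remark (as the paper already notes after \cref{prop:canonical_equivalences}) that one cannot drop the bloblet assumption in part (b) without further work, because the implication from equal distances back to equal trees-of-blobs has not been established at this point; for part (a), however, $\QQ$ determines the tree-of-blobs by the cited result of \cite{allman2023tree}, so the wider class $\NN_2$ is fine.

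Accordingly, I would expect no substantive obstacle at this stage: the difficulty has all been absorbed into \cref{lem:canonical_iff_split} (equating canonical form with displayed split systems via a case analysis on blob structure) and \cref{thm:NANUQ_circular} (circular decomposability of $d_N$ with support $\Split(\T(N))$), together with the uniqueness of the circular split decomposition from \cite{bandelt1992canonical}. Once these are in hand, \cref{thm:identifiability} is a two-line translation of \cref{prop:canonical_equivalences}, and the main care needed is simply to be explicit that ``distinguishable from data type $\mathcal{D}$'' is being used to mean ``produces distinct values of the invariant extracted from $\mathcal{D}$'', so that the logical contrapositive of each relevant equivalence gives the stated iff.
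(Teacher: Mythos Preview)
Your proposal is correct and matches the paper's own approach exactly: the paper introduces \cref{thm:identifiability} with the sentence ``With a slight reformulation, we obtain the main theorem'' and gives no further proof, treating it as an immediate restatement of \cref{prop:canonical_equivalences}. Your unpacking of the two ``distinguishability'' notions as $\QQ(N_1)\neq\QQ(N_2)$ and $d_{N_1}\neq d_{N_2}$, followed by contraposing $(i)\Leftrightarrow(ii)$ and $(i)\Leftrightarrow(iv)$ respectively, is precisely the intended translation.
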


As an instance of part (b) of the theorem,
	the pairwise NANUQ distances for each of the five canonical classes of $\BB_2$
	networks on 6 taxa shown in~\cref{fig:6blobs} are given in~\cref{tab:canonicalNANUQdist}. These, together with distances obtained by permuting taxon labels, each correspond to a unique labeled canonical form, although only 3 of the forms, 9(b), 9(c), 9(d), determine a unique network in $\BB_2$. Nonetheless, from the splits graphs in  \cref{fig:splitsgraph}, the circular order of taxa is immediately seen in all cases, and the splits graphs shapes show that the hybrid node in any $k$-cycle with $k\ge 5$ and no dts-undetectable edge is determined. Note that hybrid identifiability from this distance does not hold for all 5-blobs, since the symmetry of the splits graph for 8(b) gives no hybrid information.
	\begin{table}[h!tb]
	\centering
	{\footnotesize
		\begin{subtable}[t]{0.3\textwidth}
		\begin{tabular}{c|ccccc}
		\toprule
			& $b$ & $c$ & $d$ & $e$ & $f$ \\
			\midrule
			$a$ & 14 & 17 & 18 & 17 & 14 \\
			$b$ &  & 11 & 16 & 19 & 20 \\
			$c$ &  &   & 15 & 18 & 19 \\
			$d$ &  &  &  & 15 & 16 \\
			$e$ &  &  &  &    & 11 \\
		\bottomrule
		\end{tabular}
		\caption{Distances for networks $N_1$, $N_2$ and $N_3$, all with the same canonical form. This class includes the level-1 6-cycle network~$N_1$.}
		\label{tab:dNANUQ_level1}
		\end{subtable}
		\hspace{0.25cm}
			\begin{subtable}[t]{0.3\textwidth}
			\begin{tabular}{c|ccccc}
			\toprule
				& $b$ & $c$ & $d$ & $e$ & $f$ \\
				\midrule
				$a$ &  14 & 18 & 18 & 17 & 13 \\
				$b$ &  &  14 & 17 & 18 & 17  \\
				$c$ &  &   &  13 & 17  & 18 \\
				$d$ &  &  &  & 14 & 18 \\
				$e$ &  &  &  &    & 14 \\
			\bottomrule
			\end{tabular}
			\caption{Distances for $N_4$.}
			\label{tab:N4}
		\end{subtable}
		\hspace{0.25cm}
		\begin{subtable}[t]{0.3\textwidth}
		\begin{tabular}{c|ccccc}
		\toprule
				& $b$ & $c$ & $d$ & $e$ & $f$ \\
				\midrule
				$a$ & 11 & 16 & 18 & 18 & 17 \\
				$b$ &  &  13 & 19 & 19 & 18 \\
				$c$ &  &   & 14 & 19 & 18 \\
				$d$ &  &  &  & 13 & 16 \\
				$e$ &  &  &  &    & 11 \\
				\bottomrule
			\end{tabular}
			\caption{Distances for $N_5$.}
			\label{tab:N5}
		\end{subtable}
		
		\bigskip

		\hspace{0.25cm}
		\begin{subtable}[t]{0.3\textwidth}
			\begin{tabular}{c|ccccc}
				\toprule
				& $b$ & $c$ & $d$ & $e$ & $f$ \\
				\midrule
				$a$ & 13 & 16 & 18 & 17 & 16 \\
				$b$ &  &  11 & 17 & 20 & 19 \\
				$c$ &  &   &  16 & 19 & 18 \\
				$d$ &  &  &  &  13 & 16 \\
				$e$ &  &  &  &    & 11 \\
				\bottomrule
			\end{tabular}
			\caption{Distances for $N_6$.}
			\label{tab:N6}
		\end{subtable}
		\hspace{0.25cm}
		\begin{subtable}[t]{0.3\textwidth}
			\begin{tabular}{c|ccccc}
				\toprule
				& $b$ & $c$ & $d$ & $e$ & $f$ \\
				\midrule
$a$ & 14 & 17 & 18 & 17 & 14 \\
$b$ &    & 11 & 17 & 19 & 19 \\
$c$ &    &    & 16 & 18 & 18 \\
$d$ &    &    &    & 13 & 16\\
$e$ &    &    &     &   & 13\\
				\bottomrule
			\end{tabular}
			\caption{Distances for networks $N_7$, $N_8$ and $N_9$, all with the same canonical form.}
			\label{tab:N789}
		\end{subtable}
        
		\caption{Pairwise NANUQ distances for 6-taxon bloblet networks with the five canonical forms of \cref{fig:6blobs}.
		As proved in \cref{thm:identifiability}, these distances distinguish the five canonical forms.}
    \label{tab:canonicalNANUQdist}
    }
	\end{table}

\medskip
Looking to the future, for data analysis grounded in the main results of \cref{thm:identifiability}, note that when considering outer-labeled-planar networks, the necessary quartet information for applying the theorem is known to be obtainable under several models and data types currently in use for inference. For identifiability from gene trees, these include models of gene tree formation by variations of the Network Multispecies Coalescent (NMSC) process with both independent and common inheritance, and a displayed tree (DT) model in which gene trees are formed on the network without incomplete lineage sorting \cite{rhodes2025identifying}.
Quartet relationships are also identifiable directly from sequence data, assuming a Jukes-Cantor or Kimura-2-Parameter substitution process on the displayed trees of the network \cite{englander2025identifiability}. Although both of these works prove identifiability results beyond that of quartet information, neither suggests a clear path to practical inference. Indeed, although \cref{thm:identifiability} is phrased as an identifiability result, its underlying reliance on the NANUQ distance has the potential to contribute to fast network inference methods.

\section{Discussion} \label{sec:discussion}

In this study, we have focused on the class of outer-labeled planar, galled, level-2, semi-directed networks: a class of phylogenetic networks more general than level-1, which allows some interdependent reticulate events. While this formally means that all blobs are assumed to be outer-labeled planar, galled, and level-2, a careful reading of our proofs shows they in fact permit 2- and 3-blobs not meeting these conditions. Consequently, our results can be applied slightly more generally than the class description initially suggests. We have established that most features of the networks in this class --- captured precisely by a canonical form --- are identifiable from displayed quartets, implying by results elsewhere \cite{englander2025identifiability,rhodes2025identifying} that identifiability holds from different types of biological data under several models of evolution (see the end of \cref{sec:identifiability}).

A second contribution of this work is that our proof is constructive and lays the theoretical foundation for a consistent inference algorithm for the canonical form of a class of level-2 networks under the Network Multispecies Coalescent model, in the spirit of NANUQ and NANUQ$^+$ \cite{allman2019nanuq,allman2024nanuq+}. Notably, this would yield the first algorithm for a subclass of level-2 networks that is known to be statistically consistent for specific models and data types. While both purely combinatorial approaches \cite[e.g.,][]{van2009constructing,van2022algorithm} and computationally cumbersome Bayesian and pseudolikelihood approaches \cite[e.g.,][]{yu2015,zhang2018,wen2018} exist, neither addresses potential non-identifiability issues. The specific algorithm, together with its implementation and performance analysis, will follow in future work; however, we sketch an initial outline in the following paragraph.

First, using the existing software tool \textsc{TINNiK} \cite{allman2024tinnik}, the tree-of-blobs of a network can be consistently constructed from \emph{concordance factors}: summary statistics derived from gene tree probabilities under the Network Multispecies Coalescent model. Once this tree is constructed, it suffices to consider every blob separately, and hence we focus on bloblets for the remainder of this paragraph. Specifically, the concordance factors give information about displayed quartets and they can be used to compute the pairwise NANUQ distances between the leaves of a bloblet (see again \cref{tab:canonicalNANUQdist}). Then, when input to the consistent method \textsc{Neighbor-Net} for fitting circular decomposable metrics, these distances give rise to a \emph{splits graph} \cite{bryant2004neighbor}. As we have shown in~\cref{thm:NANUQ_circular}, the NANUQ distances are circular decomposable and thus, for perfect data, the resulting splits graph depicts the splits of the displayed trees of the network and they correspond to a unique canonical form of the network (\cref{prop:canonical_equivalences}). Using concepts similar to the \emph{darts} of \cite{allman2019nanuq}, this canonical form can be derived directly from the splits graph. While we do not explicitly treat this here, as noted after \cref{thm:identifiability}, some of this is intuitive: the circular ordering of the bloblet leaves and their hybrid nodes are easily seen in most cases (see, e.g., \cref{fig:network_splitsgraph}). Following the NANUQ$^+$ framework~\cite{allman2024nanuq+}, with noisy data, a `best fit' canonical form can be chosen by a more formal criterion.

\begin{figure}[htb]
    \centering
\includegraphics[width=0.26\textwidth]{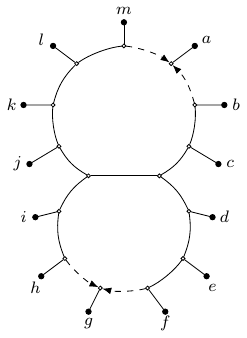}
  \qquad\qquad
\includegraphics[width=0.45\textwidth]{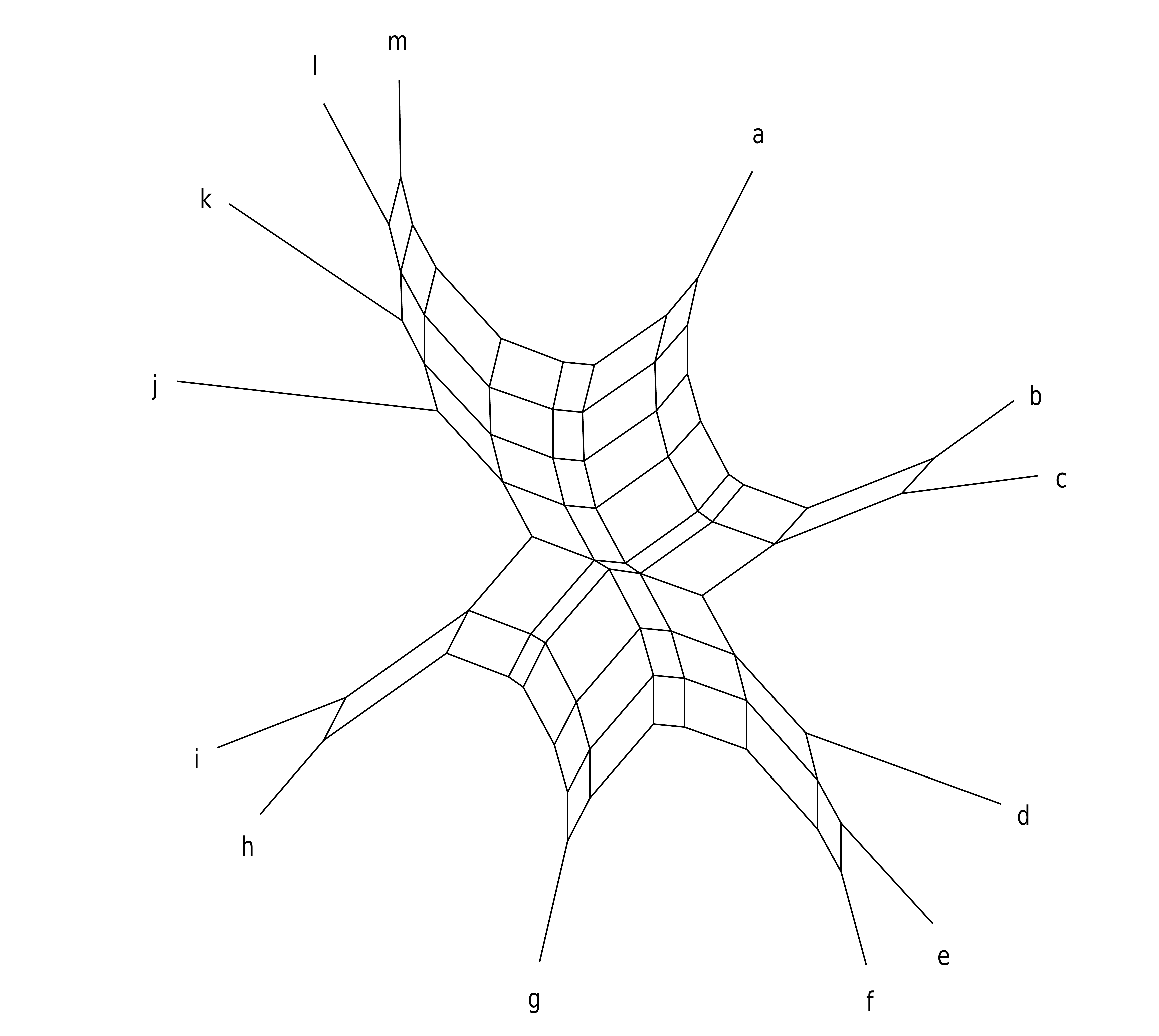}
    \caption{\emph{Left:} An outer-labeled planar, galled, level-2, semi-directed bloblet network~$N$ on leaf set $X = \{a, \ldots, m \}$. \emph{Right:} The splits graph of the pairwise NANUQ distances~$d_N$ of the network~$N$, obtained with \textsc{Neighbor-Net} \cite{bryant2004neighbor}.}
    \label{fig:network_splitsgraph}
\end{figure}

We end by stating a few open problems related to the theory developed in this work. First, we conjecture that one of our main results (\cref{thm:NANUQ_circular}) can be extended to more general classes of networks. Specifically, supported by exploratory simulations, we conjecture that this theorem extends to all level-2 outer-labeled planar galled networks (not just bloblets within this class), as is the case for level-1 networks \cite{allman2019nanuq}. Although such a result would not give new identifiability results from biological data (since the tree-of-blobs is already identifiable from quartets \cite{allman2023tree}), it could be useful for inference. In particular, it would allow for obtaining the tree-of-blobs of a network in the class $\NN_2$ directly from the NANUQ distances, instead of relying on \textsc{TINNiK} \cite{allman2024tinnik}. 
In addition, we conjecture that the NANUQ metric of outer-labeled planar, galled, level-3, bloblet networks is also circular decomposable. Proving this will either require an even more extensive case analysis than in \cref{lemma:NANUQ_weights_nonzero}, or a different technique. A final related open problem is whether some of our results can be extended to a parametric family of distances, as introduced in \cite{allman2024nanuq+}. Such a result might be used to make an inference algorithm more robust, or for developing heuristics for choosing best-fit blob structures, as in the level-1 case.

\section*{Declarations}

\paragraph{Funding} This paper is based upon work supported by the National Science Foundation (NSF) under grant~DMS-1929284 while all authors were in residence at the Institute for Computational and Experimental Research in Mathematics in Providence, RI, during the semester program on
``Theory, Methods, and Applications of Quantitative Phylogenomics''. NH and LvI were partially supported by the Dutch Research Council (NWO) grant~OCENW.M.21.306. 
ESA and JAR were partially supported by the National Science Foundation (NSF) grant DMS-205176, and HB by NSF grant DMS-2331660.

\paragraph{Data availability statement}
No data are associated with this article.

\printbibliography[title = References]

\end{document}